 \documentclass{article}
\usepackage{arxiv}                                    
\usepackage{graphicx}          
\usepackage{xcolor}
\usepackage[utf8]{inputenc}
\usepackage{amsmath} 
\usepackage{amssymb}   
\usepackage{amsfonts}   
\usepackage{graphicx} 
\usepackage{enumitem} 
\usepackage{subcaption}

 \usepackage{algorithm}
 \usepackage{algpseudocode}
\newtheorem{assumption}{Assumption}

\newtheorem{theorem}{Theorem}

\newtheorem{remark}{Remark}
\newtheorem{proposition}{Proposition}

\usepackage{float}
\usepackage{booktabs}
\usepackage{siunitx}
\usepackage{cite}

\usepackage{threeparttable}
\usepackage{tabularx}
\usepackage{adjustbox} 
\usepackage{colortbl}   
\usepackage{makecell}

\newcommand*{\QEDA}{\hfill\ensuremath{\blacksquare}}

\title{MPC as a Copilot: A Predictive Filter Framework with Safety and Stability Guarantees} %

\author{ Yunda Yan\\
	Department of Computer Science\\
	University College London, London\\
    WC1E 6BT, UK \\
	\texttt{yunda.yan@ucl.ac.uk} \\
	\And Chenxi Tao\\
	School of Automation\\
    Southeast University, Nanjing\\
    210096, China\\
	\texttt{chenxi.tao@seu.edu.cn} \\
    \And Jinya Su\\
	School of Automation\\
    Southeast University, Nanjing\\
    210096, China\\
	\texttt{sucas@seu.edu.cn} \\
    \And Cunjia Liu\\
	Department of Aeronautical and Automotive Engineering\\
    Loughborough University, Leicestershire\\
    LE11 3TU, UK\\
	\texttt{c.liu5@lboro.ac.uk} \\
        \And Shihua Li\\
	School of Automation\\
    Southeast University, Nanjing\\
    210096, China\\
	\texttt{lsh@seu.edu.cn} \\
}

\begin{document}
\maketitle

\begin{abstract}
Ensuring both safety and stability remains a fundamental challenge in learning-based control, where goal-oriented policies often neglect system constraints and closed-loop state convergence. To address this limitation, this paper introduces the \textit{Predictive Safety--Stability Filter} (PS\textsuperscript{2}F), a unified predictive filter framework that guarantees constraint satisfaction and asymptotic stability within a single architecture. The PS\textsuperscript{2}F framework comprises two cascaded optimal control problems: a nominal model predictive control (MPC) layer that serves solely as a \textit{copilot}, implicitly defining a Lyapunov function and generating safety- and stability-certified predicted trajectories, and a secondary filtering layer that adjusts external command to remain within a provably safe and stable region. This cascaded structure enables PS\textsuperscript{2}F to inherit the theoretical guarantees of nominal MPC while accommodating goal-oriented external commands.  Rigorous analysis establishes recursive feasibility and asymptotic stability of the closed-loop system without introducing additional conservatism beyond that associated with the nominal MPC. Furthermore, a time-varying parameterisation allows PS\textsuperscript{2}F to transition smoothly between safety-prioritised and stability-oriented operation modes, providing a principled mechanism for balancing exploration and exploitation. The effectiveness of the proposed framework is demonstrated through comparative numerical experiments.
\end{abstract}

\keywords{                 
Predictive safety–stability filter; Model predictive control;   Safety and stability; Learning-based control}

\section{Introduction}

Recent advances in machine learning and high-fidelity simulators, coupled with increases in computational power and hardware efficiency, have sparked growing interest in learning-based and data-driven control methods~\cite{hewing2020learning}. Although these methods have demonstrated impressive performance in various real-world applications, such as aerial and ground vehicles~\cite{kaufmann2023champion, song2023reaching}, their theoretical foundations remain relatively underdeveloped, with only a limited number of results providing rigorous guarantees~\cite{berberich2020data,fazlyab2020safety}. This lack of formal guarantees presents significant risks in safety-critical domains such as robotics and automated driving, where violations of safety constraints can lead to catastrophic consequences~\cite{yan2023surviving, hsu2023safety}.

To mitigate this research gap, filter-like frameworks have emerged as a promising paradigm, serving as a safety layer that monitors and, whenever necessary, adjusts control inputs to ensure constraint satisfaction~\cite{wabersich2023data}. Most existing approaches rely on the construction of a control invariant set, either explicitly or implicitly, that guarantees safe system operation when subject to learning-based control inputs. For instance, the control barrier function (CBF) framework~\cite{ames2016control} employs a barrier function to characterise a safe set and enforces its invariance through a differential inequality. By contrast, the predictive safety filter (PSF)~\cite{wabersich2022predictive} leverages a constrained optimisation problem inspired by model predictive control (MPC), typically incorporating a terminal constraint to guarantee recursive feasibility and, thereby, safety. Both approaches have demonstrated remarkable success in ensuring safety for robotic systems~\cite{ha2025learning, tearle2021predictive}.

However, safety alone may not be sufficient. Persistent oscillations within the constraints can still degrade performance and undermine reliability. Embedding asymptotic stability into the safety framework is therefore crucial to ensure both constraint satisfaction and convergence, yet only a few attempts have been made in this direction. Within the CBF framework, a control Lyapunov function (CLF) can be incorporated into the optimisation problem to enforce stability; however, this often leads to infeasibility due to the absence of a unified CLF design~\cite{ames2016control}. Introducing a slack variable in the CLF constraint can alleviate this issue, but stability still cannot be guaranteed because the slack variable cannot, in general, be forced to remain zero at all times~\cite{jankovic2018robust}. In the absence of an embedded CLF, the external control law is generally required to stabilise to ensure both asymptotic stability and constraint satisfaction~\cite{cortez2022compatibility}. In \cite{zinage2025universal}, a universal barrier function is proposed to guarantee safety and stability simultaneously. However, this approach requires the nominal controller to be expressed in an integral form and is therefore not directly applicable to general learning-based or data-driven control settings. Within the PSF framework, stability results are even fewer. In~\cite{milios2024stability, didier2024predictive}, an additional stability constraint can be embedded into the original optimisation problem to ensure asymptotic stability; however, this modification may compromise the recursive feasibility property of the original formulation. Guaranteeing both safety and asymptotic stability without excessive conservatism is therefore already challenging, and an even more theoretically significant and practically meaningful question is whether the user can manually determine when the system should begin to stabilise. This capability is particularly valuable in exploration–exploitation scenarios in robotics~\cite{rickert2014balancing}, where safety must be maintained during the exploration or learning phase while asymptotic stability is enforced once reliable system knowledge becomes available. Shared autonomy~\cite{marcano2020review} and human–robot interaction~\cite{selvaggio2021autonomy} exemplify such settings: user-generated commands may be suboptimal or even unsafe, requiring the controller to grant operators sufficient freedom to express intent (safe exploration) while ensuring that all executed actions remain safe and ultimately drive the system back to a stable operating regime (stable exploitation).

Motivated by the aforementioned research challenges and gaps, we propose the 
\textit{Predictive Safety--Stability Filter} (PS\textsuperscript{2}F), a unified 
predictive control framework consisting of two optimal control problems (OCPs) 
arranged in a cascaded structure. The first OCP is a nominal MPC problem whose value 
function implicitly serves as a Lyapunov function for the nonlinear constrained 
system. Its optimal input and state trajectories represent the ideal, 
theoretically certified solutions that guarantee both safety and stability; however, 
these inputs are not applied directly to the system, but instead function as a 
\textit{copilot}. The external controller acts as a \textit{pilot}, generating 
goal-oriented commands that pursue high-level tasks or learning objectives. The 
second OCP then reconciles these two objectives by computing a control input that 
remains as close as possible to the external command while ensuring it lies within a 
provably safe and stable domain, as illustrated in Fig.~\ref{fig:ps2f_copilot}.  This layered structure enables PS\textsuperscript{2}F to inherit the recursive 
feasibility and stability guarantees of nominal MPC while preserving the flexibility 
needed to accommodate arbitrary external commands. Importantly, the second OCP is 
constructed so as not to introduce additional conservatism: feasibility is preserved 
as long as the initial state lies within the nominal MPC’s region of attraction. 
Furthermore, by appropriately designing time-varying parameters, 
PS\textsuperscript{2}F can prioritise safety during an initial phase and then 
deliberately transition into a stability-enforcing mode at a user-specified time. 
The ability of PS\textsuperscript{2}F to mediate between arbitrary external 
commands and safety–stability–certified feedback makes it particularly well suited 
for human-in-the-loop applications such as shared autonomy and human–robot 
interaction, where operators must retain freedom of intent while the system remains 
safe and ultimately convergent.

\begin{figure}[t]
    \centering
    \includegraphics[width=0.5\linewidth]{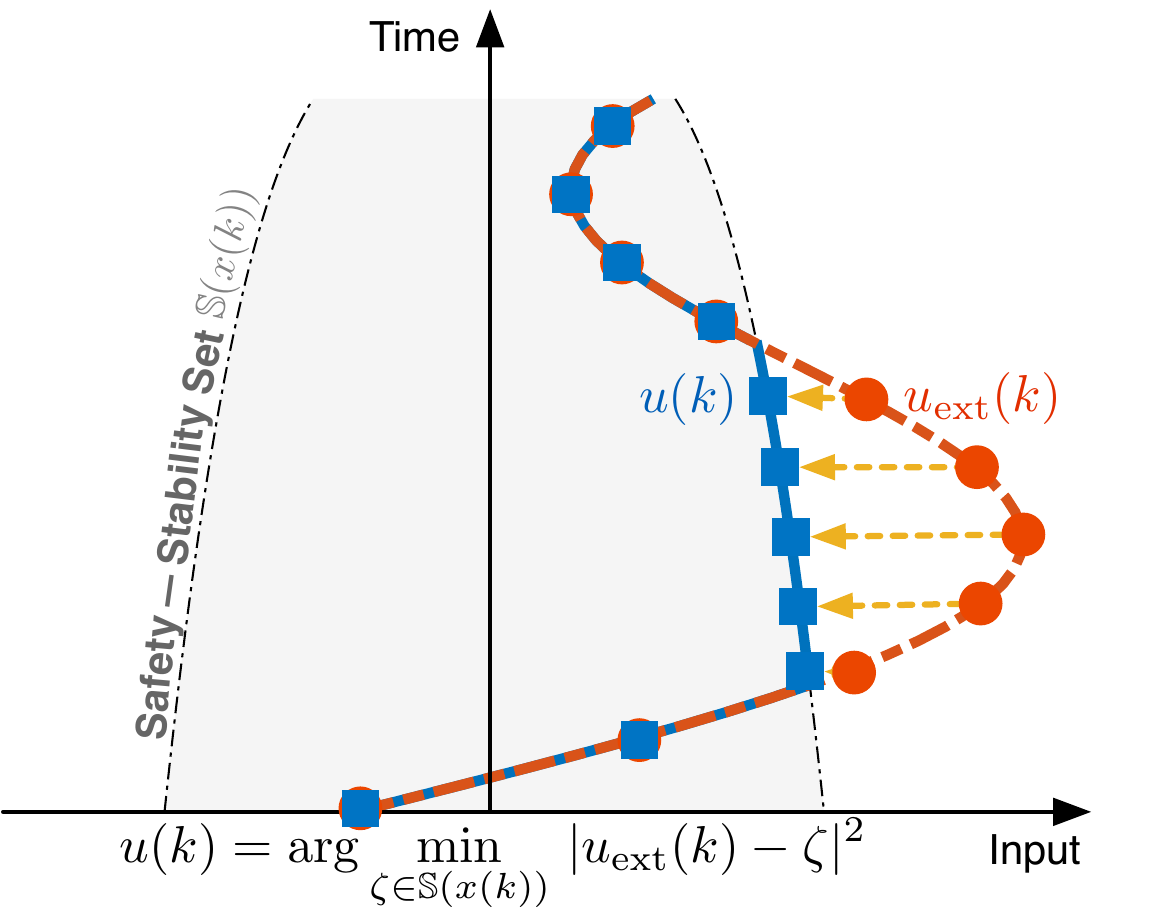}
    \caption{Conceptual illustration of the proposed PS\textsuperscript{2}F framework. 
    The external controller generates a goal-oriented command $u_{\mathrm{ext}}$, 
    which may prioritise efficiency or high-level task performance but can potentially violate constraints or compromise stability. 
    The PS\textsuperscript{2}F framework acts as a predictive filter that projects $u_{\mathrm{ext}}$ 
    onto the safety--stability set $\mathbb{S}(x)$ (defined in~\eqref{eq:admissible_controls_sim} and implicitly constructed via the two OCPs), 
    ensuring that the applied control input remains both safe and stable.}
    \label{fig:ps2f_copilot}
\end{figure}

The remainder of this paper is organised as follows.
Section 2 revisits the nominal MPC formulation and the conditions required to ensure recursive feasibility and asymptotic stability.
Section 3 presents the proposed PS\textsuperscript{2}F framework and provides a detailed theoretical analysis.
Section 4 demonstrates the effectiveness of the proposed method through comparative numerical studies.
Finally, Section 5 concludes the paper. For clarity of presentation, the main proofs are provided in Appendix A and Appendix B.






  

\textit{Notation:} $\mathbb{I}$ and $\mathbb{R}$ denote the sets of integers and real numbers, respectively, where superscripts or subscripts may be added to specify particular ranges (e.g., $\mathbb{I}_{\ge 0}$). The matrices $0_{n\times m} \in \mathbb{R}^{n\times m}$ and $I_n \in \mathbb{R}^{n\times n}$ denote the zero matrix and the identity matrix, respectively. For any sequence $\mathbf{u} = \{u_0, u_1, \dots, u_N\}$, we denote by $\mathbf{u}_{a:b}:= \{u_{a}, \dots, u_{b}\}$ the subsequence from index $a$ to $b$, where $a$ and $b$ are integers satisfying $a \le b$ and $a, b\in\mathbb{I}_{0:N}$. For any vector $x \in \mathbb{R}^n$, $|x|$ denotes the Euclidean norm, and $|x|_P^2$ is defined as $|x|_P^2 := x^\top P x$, where $P \in \mathbb{R}^{n \times n}$ is a symmetric matrix.   A C-set is defined as a convex, compact set containing the origin. Given a set $\mathbb{A} \subset \mathbb{R}^{n}$ and a matrix 
$K \in \mathbb{R}^{m\times n}$, we define
 $K \mathbb{A} := \{ K a \;|\; a \in \mathbb{A} \}.$ A function belongs to class $\mathcal{K}$ if it is continuous, zero at zero, and strictly increasing; a function belongs to class $\mathcal{K}_\infty$ if it is in class $\mathcal{K}$ and unbounded.

\section{Preliminaries}

\subsection{System Description}

Consider a non-affine, nonlinear, time-invariant, discrete-time system given by
\begin{equation}\label{eqsystem}
    x^{+} = f(x, u),
\end{equation}
where $x \in \mathbb{R}^n$ denotes the system state, $u \in \mathbb{R}^m$ the control input, and $x^{+}$ the successor state. 
The system is subject to state and input constraints
\begin{equation}
    x \in \mathbb{X}, \qquad  u \in \mathbb{U},
\end{equation}
where $\mathbb{X} \subset \mathbb{R}^n$ and $\mathbb{U} \subset \mathbb{R}^m$ are known C-sets, i.e., convex, compact sets containing the origin.

Assume that at each time step $k$, a goal-oriented command $u_{\text{ext}}(k) \in \mathbb{R}^m$ is available, which may be designed using either learning-based or data-driven methods but does not inherently guarantee safety or stability.  The objective of this work is to construct a safety--stability (S\textsuperscript{2})-set $\mathbb{S}(x) \subseteq \mathbb{U}$ such that the filtered control input
\begin{equation}\label{final}
    u  = \arg\min_{\zeta \in \mathbb{S}(x)} \, |u_{\text{ext}} - \zeta|^2
\end{equation}
guarantees both constraint satisfaction and asymptotic stability of the resulting closed-loop system, for all admissible initial states within the feasible operating region. The S\textsuperscript{2}-set $\mathbb{S}(x)$ is constructed implicitly through two cascaded OCPs, which are introduced sequentially in the following subsections.  To distinguish between them, we denote the control sequence by $\mathbf{v}$ and the predicted state by $z(i;x):= \phi(i; x, \mathbf{v})$ in the first OCP (the nominal MPC), and by $\mathbf{u}$ and $x(i;x):= \phi(i; x, \mathbf{u})$ in the second OCP (the filter). 
Here, $\phi(i; x, \cdot)$ denotes the $i$-step state transition from the initial state $x$ under the corresponding control sequence. 
By definition, $x(0;x) = z(0;x) = x$.




\subsection{Nominal MPC as a Copilot}

The first OCP is a nominal MPC problem, denoted by $\mathbb{P}_N(x)$, with prediction horizon $N$
\begin{equation}
    \mathbb{P}_N(x): \quad 
    V_N^*(x) := \min_{\mathbf{v}} 
    \bigl\{ V_N(x, \mathbf{v}) \ \big| \ \mathbf{v} \in \mathcal{V}_N(x) \bigr\},
\end{equation}
where the cost function $V_N(x, \mathbf{v})$ and the corresponding constraint set are defined as
\begin{equation}
\begin{aligned}
    &V_N(x, \mathbf{v}) := \sum_{i=0}^{N-1} \ell(z(i;x), v(i;x)) + V_f(z(N;x)), \\[2mm]
    &\mathcal{V}_N(x) := \Big\{ \mathbf{v} \ | \
    z(i;x) \in  \mathbb{X}, v(i;x)  \in \mathbb{U},  \forall i  \in  \mathbb{I}_{0:N-1}, z(N;x)  \in \mathbb{X}_f  \Big\}.
\end{aligned}
\end{equation}
Here, $\mathbb{X}_f$ denotes the terminal constraint set, while 
$\ell(\cdot,\cdot)$ and 
$V_f(\cdot)$  represent the stage and terminal cost functions, respectively.

Let $\mathcal{X}_N$ denote the set of initial states in $\mathbb{X}$ for which $\mathbb{P}_N(x)$ is feasible, i.e.,
\begin{equation}
    \mathcal{X}_N:= \bigl\{ x \in \mathbb{X} \,\big|\, \mathcal{V}_N(x) \neq \emptyset \bigr\}.
\end{equation}
For any $x \in \mathcal{X}_N$, the optimiser of $\mathbb{P}_N(x)$ is given by
\begin{equation}\label{vstar}
\begin{aligned}
    \mathbf{v}^*(x) 
    &= \arg\min_{\mathbf{v}} \bigl\{ V_N(x, \mathbf{v}) \ \big| \ \mathbf{v} \in \mathcal{V}_N(x) \bigr\}= \{ v^*(0; x), v^*(1; x), \ldots, v^*(N-1; x) \},
\end{aligned}
\end{equation}
and the corresponding optimal state sequence is
\begin{equation}\label{zstar}
    \mathbf{z}^*(x) = \{ z^*(0; x), z^*(1; x), \ldots, z^*(N; x) \}.
\end{equation}

To ensure recursive feasibility and asymptotic stability of the closed-loop system under the nominal MPC, the following standard assumptions are imposed. 
For further details, the reader is referred to~\cite[Chap.~2.4]{rawlings2020model}.

\begin{assumption} \label{assum:continuity}
The functions $f(\cdot,\cdot) : \mathbb{X} \times \mathbb{U} \rightarrow  \mathbb{X}$, 
$\ell(\cdot,\cdot) : \mathbb{X} \times \mathbb{U} \rightarrow \mathbb{R}_{\geq 0}$ and 
$V_f(\cdot) : \mathbb{X}_f \rightarrow \mathbb{R}_{\geq 0}$ are continuous, with
$f(0_{n\times 1}, 0_{m\times 1}) = 0$, $\ell(0_{n\times 1}, 0_{m\times 1}) = 0$ and $V_f(0_{n\times 1}) = 0$.
\end{assumption}


\begin{assumption} \label{assum:constraint_sets}
The sets $\mathbb{X}$,   $\mathbb{U}$, and   $\mathbb{X}_f \subseteq \mathbb{X}$ are  C-sets.
\end{assumption}

\begin{assumption}\label{assum:cost_bounds}
The stage cost \(\ell(\cdot,\cdot)\),  the terminal cost \(V_f(\cdot)\), and the terminal set  $\mathbb{X}_f$ satisfy the following properties:
\begin{enumerate}[label=(\alph*)] 
    \item For all $x\in \mathbb{X}_f$, there exists a $u\in\mathbb{U}$ satisfying
\[
\begin{aligned}
   & f(x,u)\in \mathbb{X}_f\\
   & V_f (f(x,u))-V_f(x)  \leq  - \ell(x,u).
\end{aligned}
\]
 \item There exists \(\mathcal{K}_\infty\)  functions \(\alpha_1(\cdot)\) and \(\alpha_2(\cdot)\) satisfying 
 \[
\begin{aligned}
   & \ell(x, u) \geq \alpha_1(|x|) \quad \forall x \in \mathcal{X}_N, u \in   \mathbb{U} \\
   & V_f (x) \leq  \alpha_2(|x|) \quad \forall x \in \mathbb{X}_f.
\end{aligned}
\]
\end{enumerate}
\end{assumption}



\begin{proposition}[{\cite[Proposition~2.16]{rawlings2020model}}]\label{lem:VN_upper}
Suppose that Assumptions~\ref{assum:continuity}, \ref{assum:constraint_sets}, 
and \ref{assum:cost_bounds} hold, and that the terminal set 
$\mathbb{X}_f$ contains the origin in its interior.  
Then, there exists a $\mathcal{K}_\infty$ function $\beta$ such that  
\[
    V_N^*(x) \;\le\; \beta(|x|), \qquad \forall\, x \in \mathcal{X}_N.
\]
\end{proposition}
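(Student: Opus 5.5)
The argument has two parts: a local bound on a neighbourhood of the origin inside $\mathbb{X}_f$, and an extension of that bound to all of $\mathcal{X}_N$ using compactness. The extension step is the delicate one.

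\textbf{Step 1: bound on $\mathbb{X}_f$.} Take $x \in \mathbb{X}_f$. Apply Assumption~\ref{assum:cost_bounds}(a) repeatedly, starting from $z(0)=x$. This gives inputs $v(i)\in\mathbb{U}$ with $z(i+1)=f(z(i),v(i))\in\mathbb{X}_f\subseteq\mathbb{X}$ and
\[
\ell(z(i),v(i)) \le V_f(z(i)) - V_f(z(i+1)).
\]
So the sequence $\mathbf{v}$ lies in $\mathcal{V}_N(x)$, and $x\in\mathcal{X}_N$. Summing the inequalities telescopes:
\[
V_N(x,\mathbf{v}) = \sum_{i=0}^{N-1}\ell(z(i),v(i)) + V_f(z(N)) \le V_f(x).
\]
By optimality and Assumption~\ref{assum:cost_bounds}(b), $V_N^*(x)\le V_f(x)\le\alpha_2(|x|)$ for all $x\in\mathbb{X}_f$.

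\textbf{Step 2: extension to $\mathcal{X}_N$.} Since the origin is in the interior of $\mathbb{X}_f$, there is $r>0$ with the ball $\{|x|\le r\}\subseteq\mathbb{X}_f$.

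Next I show $V_N^*$ is bounded on $\mathcal{X}_N$. Every $x\in\mathcal{X}_N$ satisfies $x\in\mathbb{X}$, and any feasible $\mathbf{v}$ keeps all predicted states and inputs in $\mathbb{X}\times\mathbb{U}$ and the terminal state in $\mathbb{X}_f$. These sets are compact, and $\ell$ and $V_f$ are continuous on them by Assumptions~\ref{assum:continuity}--\ref{assum:constraint_sets}. Hence
\[
V_N^*(x) \le N\max_{\mathbb{X}\times\mathbb{U}}\ell + \max_{\mathbb{X}_f}V_f =: \bar c < \infty .
\]
This holds without needing the infimum to be attained, since the bound applies to every feasible $\mathbf{v}$.

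\textbf{Step 3: splicing the two bounds.} I would define
\[
\beta(s) := \max\{\alpha_2(s),\ \bar c\, s/r\}.
\]
This is a maximum of two $\mathcal{K}_\infty$ functions, so it is $\mathcal{K}_\infty$. If $|x|\le r$, then $x\in\mathbb{X}_f$ and Step 1 gives $V_N^*(x)\le\alpha_2(|x|)\le\beta(|x|)$. If $|x|>r$, then $\bar c\,|x|/r>\bar c\ge V_N^*(x)$, so again $V_N^*(x)\le\beta(|x|)$.

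The delicate point is only Step 2. The local bound must be glued to a global constant bound, and that requires (i) the interior-point hypothesis, to obtain a uniform radius $r$, and (ii) compactness of $\mathbb{X}$, $\mathbb{U}$ and $\mathbb{X}_f$ together with continuity, to obtain finiteness of $\bar c$. The linear-in-$s$ splice then avoids any need for continuity of $V_N^*$ itself.
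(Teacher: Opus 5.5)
Your proof is correct and follows essentially the standard argument behind the result the paper invokes. The paper gives no proof of its own and defers to \cite[Proposition~2.16]{rawlings2020model}, and your three steps reproduce that argument: telescoping the terminal decrease gives $V_N^*(x)\le V_f(x)\le\alpha_2(|x|)$ on $\mathbb{X}_f$; compactness of $\mathbb{X}$, $\mathbb{U}$, $\mathbb{X}_f$ with continuity of $\ell$ and $V_f$ gives a uniform bound on $\mathcal{X}_N$ (one feasible sequence suffices, so continuity of $V_N^*$ is never needed); and the interior-point hypothesis lets you splice the two into a single $\mathcal{K}_\infty$ bound.
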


\begin{remark}
To guarantee safety and stability, the stage cost function in MPC is typically required to be non-negative and continuous (see Assumption~\ref{assum:continuity}). While these properties enable rigorous analysis (such as recursive feasibility and asymptotic stability), they also imply that the nominal MPC cost may not directly encode the practical task that the system is meant to accomplish. In many applications, task-oriented objectives are inherently discontinuous or non-smooth. For instance, reinforcement-learning controllers commonly employ sparse or event-triggered reward structures, which introduce discontinuities; logic-based or mode-switching tasks lead to piecewise or hybrid cost functions; and human-generated teleoperation inputs may not correspond to any continuous cost at all. These characteristics make such task-driven objectives fundamentally incompatible with standard MPC cost requirements. This is precisely why the external command $u_{\mathrm{ext}}(k)$ plays a crucial role in the proposed framework: it can originate from any source, e.g., human intent, learning-based policies, heuristic planners, or discontinuous signals, and thus more faithfully captures the true task objective. The nominal MPC therefore acts merely as a safety-stability copilot, ensuring constraint satisfaction and stability while allowing $u_{\mathrm{ext}}(k)$ to dictate the high-level objective.
\end{remark}

\section{Main Results}
\subsection{Predictive Safety--Stability Filter}

With the optimal sequences $\mathbf{v}^*(x)$ and $\mathbf{z}^*(x)$ obtained from \eqref{vstar} and \eqref{zstar}, 
we can now introduce the proposed framework, the \textit{Predictive Safety--Stability Filter} (PS\textsuperscript{2}F).

Given any external command $u_{\text{ext}}$, the second OCP is formulated as
\begin{equation}\label{PS2F}
    \mathbb{P}_{f,M}(x): \quad 
    V_{f,M}^*(x) := \min_{\mathbf{u}} 
    \bigl\{ V_{f,M}(x, \mathbf{u}) \ \big| \ \mathbf{u} \in \mathcal{U}^a_M(x) \bigr\},
\end{equation}
where the cost function $V_{f,M}(x, \mathbf{u})$ and the corresponding constraint set $\mathcal{U}^a_M(x)$ are defined as
\begin{equation}
\begin{aligned}
&V_{f,M}(x,\mathbf{u}) := |u_\text{ext}-u(0;x)|^2 \\
&\begin{aligned}
    \mathcal{U}^a_M(x) := \Big\{ \mathbf{u} \ \Big| \ &
x(i;x) \in \mathbb{X}, ~ u(i;x) \in \mathbb{U}, ~ \forall i \in \mathbb{I}_{0:M-1}, \eqref{percons} ~\text{and}~\eqref{mequ}
\Big\}.
\end{aligned}
\end{aligned}
\end{equation}
Note that the external command $u_{\mathrm{ext}}$ is embedded directly into the cost  function of $\mathbb{P}_{f,M}(x)$. The performance constraint and terminal  constraint are given respectively by
\begin{subequations}
\begin{align}
    &L(\mathbf{x}_{0:M-1}(x), \mathbf{u}(x)) - L(\mathbf{z}^*_{0:M-1} (x), \mathbf{v}^*_{0:M-1}(x)) - a\ell(x(0;x), u(0;x)) \le 0, \label{percons}\\
    &x(M;x) = z^*(M; x), \label{mequ}
\end{align}
\end{subequations}
where $a\ge 0$ is a tunable parameter and $M\in \mathbb{I}_{1:N}$ is the prediction
horizon of the second OCP.
The function $L(\cdot,\cdot)$ represents the cumulative stage cost over the first $M$ steps, defined as
\begin{equation}
    L(\mathbf{x}_{0:M-1}(x), \mathbf{u}(x)) := \sum_{i=0}^{M-1} \ell\bigl(x(i;x), u(i;x)\bigr).
\end{equation}
Similarly, we have 
\[
L(\mathbf{z}^*_{0:M-1}(x), \mathbf{v}^*_{0:M-1}(x)) = \sum_{i=0}^{M-1} \ell\bigl(z^*(i;x),v^*(i;x)\bigr).\]
The optimiser of $\mathbb{P}_{f,M}(x)$ is denoted by
\begin{equation}\label{ustar}
\begin{aligned}
    \mathbf{u}^*(x) 
    &= \arg\min_{\mathbf{u}} \bigl\{ V_{f,M}(x, \mathbf{u}) \ \big| \ \mathbf{u} \in \mathcal{U}^a_M(x) \bigr\} = \{ u^*(0; x), u^*(1; x), \ldots, u^*(M-1; x) \},
\end{aligned}
\end{equation}
and the corresponding optimal state sequence is
\begin{equation}\label{xstar}
    \mathbf{x}^*(x) = \{ x^*(0; x), x^*(1; x), \ldots, x^*(M; x) \}.
\end{equation}
Finally, the filtered control law is defined as
\begin{equation}\label{equ_finalcontrol}
    u = \kappa(x, u_{\text{ext}}) := u^*(0; x).
\end{equation}
Since only the first control component influences the cost function, 
\eqref{equ_finalcontrol} is equivalent to \eqref{final} through the definition of the 
S\textsuperscript{2}-set $\mathbb{S}(x)$, given by
\begin{equation} \label{eq:admissible_controls_sim}
    \mathbb{S}(x)
    := \{\, u(0;x)\;|\; \mathbf{u} \in \mathcal{U}^a_M(x)\,\}.
\end{equation}
The complete control procedure is presented in Algorithm~\ref{alg:ps2f_simple}, while Fig.~\ref{fig:mpc-indicator} illustrates the overall structure of the proposed PS\textsuperscript{2}F framework.

\begin{figure}[!tb]
    \centering
    \includegraphics[width=0.4\linewidth]{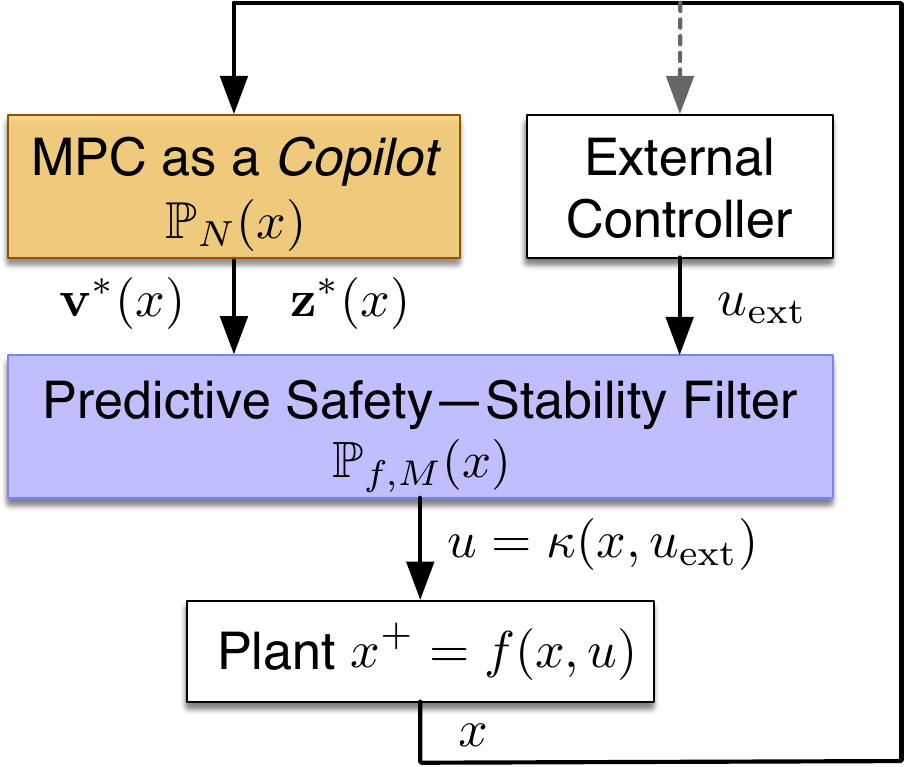}
   \caption{Schematic of the proposed PS\textsuperscript{2}F framework. 
The nominal MPC serves as a \textit{copilot}, generating safety- and stability-certified trajectories. 
The filtering OCP $\mathbb{P}_{f,M}(\cdot)$ then refines the external command within a safe and stable region before applying the final control action to the system plant. The dashed arrow above the external controller indicates that the external command may be independent of the system state, i.e., operating in open-loop mode.}

    \label{fig:mpc-indicator}
\end{figure}

\begin{algorithm}
\caption{\textbf{PS\textsuperscript{2}F}}
\label{alg:ps2f_simple}
\begin{algorithmic}[1]
\State \textbf{Offline:}
Specify prediction horizons $N$  and $M\!\le\!N$;
cost functions $\ell(\cdot,\cdot)$ and $V_f(\cdot)$; constraint sets $\mathbb{X},\mathbb{U},\mathbb{X}_f$; 
and parameter  $a\in[0,1)$.
\State \textbf{Initialisation:} Set time $k\leftarrow0$ and measure $x(0)$.

  \State \textbf{Step One-Nominal MPC:}
  Solve $\mathbb{P}_N\big(x(k)\big)$ to obtain $\mathbf{v}^*(x(k))$ and $\mathbf{z}^*(x(k))$. 
 
  \State \textbf{Step Two-Receive external command:}
  Obtain $u_{\mathrm{ext}}(k)$ from any goal-oriented controller.

  \State \textbf{Step Three-PS\textsuperscript{2}F:}
  Solve $\mathbb{P}_{f,M}\big(x(k)\big)$  to obtain 
  $\mathbf{u}^*(x(k))$ and $\mathbf{x}^*(x(k))$. 
  Set the filtered control input
  \[
  u(k)\leftarrow u^*(0;x(k)) \quad \text{(cf.\ \eqref{equ_finalcontrol}).}
  \]

  \State \textbf{Step Four-Apply and propagate:}
  Apply $u(k)$ to the system \eqref{eqsystem} to obtain $x(k{+}1)=f\big(x(k),u(k)\big)$.
  \State $k \leftarrow k + 1$
and go to \textbf{Step One}. 
 
\end{algorithmic}
\end{algorithm}

Before analysing the closed-loop behaviour, we first establish several fundamental properties of the second OCP, $\mathbb{P}_{f,M}(x)$, including its feasibility and the role of the tunable parameters $a$ and $M$. 
Unless otherwise specified, Assumptions~\ref{assum:continuity}, \ref{assum:constraint_sets}, and \ref{assum:cost_bounds} are assumed to hold throughout this subsection. 
For readability, the proofs of the following results are provided in Appendix~\ref{proof_propos}.

\begin{proposition}\label{lem_fea}
 $\mathbb{P}_{f,M}(x)$ is feasible for all $x \in \mathcal{X}_N$; that is, $\mathcal{U}^a_M(x) \neq \emptyset$ for every $x \in \mathcal{X}_N$.
\end{proposition}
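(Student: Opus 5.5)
The natural approach is to exhibit an explicit feasible candidate: the truncation of the nominal MPC optimiser itself. Fix $x \in \mathcal{X}_N$. By definition of $\mathcal{X}_N$, the set $\mathcal{V}_N(x)$ is nonempty. The minimum in $\mathbb{P}_N(x)$ is attained, since the feasible set is compact (it is defined through continuous $f$ and the C-sets $\mathbb{X},\mathbb{U},\mathbb{X}_f$ of Assumptions~\ref{assum:continuity} and~\ref{assum:constraint_sets}, with $\mathbf{v}$ ranging in $\mathbb{U}^N$) and $V_N$ is continuous. Hence the optimiser $\mathbf{v}^*(x)$ in \eqref{vstar} and the state sequence $\mathbf{z}^*(x)$ in \eqref{zstar} are well defined. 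This attainment is implicitly assumed by the construction of $\mathbb{P}_{f,M}(x)$, which refers to them directly.

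I will take the candidate $\mathbf{u} := \mathbf{v}^*_{0:M-1}(x)$, which is admissible because $M \in \mathbb{I}_{1:N}$. Since the same initial state $x$ and the same inputs drive the same dynamics \eqref{eqsystem}, the predicted states coincide, i.e.\ $x(i;x) = \phi(i;x,\mathbf{u}) = z^*(i;x)$ for all $i \in \mathbb{I}_{0:M}$. I then check the constraints of $\mathcal{U}^a_M(x)$ one at a time.

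\begin{enumerate}[label=(\roman*)]
\item \emph{State and input constraints.} For $i \in \mathbb{I}_{0:M-1} \subseteq \mathbb{I}_{0:N-1}$, we have $x(i;x) = z^*(i;x) \in \mathbb{X}$ and $u(i;x) = v^*(i;x) \in \mathbb{U}$, because $\mathbf{v}^*(x) \in \mathcal{V}_N(x)$.
\item \emph{Terminal equality \eqref{mequ}.} It holds trivially, since $x(M;x) = z^*(M;x)$.
\item \emph{Performance constraint \eqref{percons}.} The two cumulative costs $L(\cdot,\cdot)$ are identical, so the left-hand side reduces to $-a\,\ell(x,v^*(0;x))$. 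This is $\le 0$ because $a \ge 0$ and $\ell \ge 0$ by Assumption~\ref{assum:continuity}.
\end{enumerate}

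Hence $\mathbf{u} \in \mathcal{U}^a_M(x)$, which shows $\mathcal{U}^a_M(x) \neq \emptyset$.

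There is no real obstacle. The only delicate point is ensuring that $\mathbf{v}^*(x)$ exists, i.e.\ that the minimum of $\mathbb{P}_N(x)$ is attained; the compactness–continuity argument above handles this. If one prefers to avoid relying on attainment, the same argument works verbatim with any minimiser selected in \eqref{vstar}, since the filter problem is defined with respect to that selection.
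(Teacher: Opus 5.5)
Your proof is correct and follows the paper's own argument: the paper also uses the truncation $\{v^*(0;x),\dots,v^*(M-1;x)\}$, with state sequence $\{z^*(0;x),\dots,z^*(M;x)\}$, as the feasible candidate and checks \eqref{percons} and \eqref{mequ}. Your checks of the state and input constraints, the sign argument $-a\,\ell(x,v^*(0;x))\le 0$, and the attainment of the minimum in $\mathbb{P}_N(x)$ just write out details the paper calls ``straightforward to verify''.
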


\begin{proposition}\label{a0lemma}
Suppose that  $x \in \mathcal{X}_N$ and $a = 0$.
\begin{itemize}
    \item[(a)] The below equality holds:
    $$L(\mathbf{x}^*_{0:M-1}(x), \mathbf{u}^*(x)) = L(\mathbf{z}^*_{0:M-1}(x), \mathbf{v}^*_{0:M-1}(x)).$$ 
    \item[(b)] Moreover, if the OCP $\mathbb{P}_N(x)$ is strictly convex, then the filtered control law reduces to the nominal MPC action. In particular,
    \[
        u = u^*(0;x) = v^*(0;x),
    \]
    and the S\textsuperscript{2}-set degenerates to a singleton:
    \[
        \mathbb{S}(x) = \{ v^*(0;x) \}.
    \]
\end{itemize}
\end{proposition}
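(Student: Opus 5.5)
Part (a) follows from a sandwich argument. With $a=0$, the performance constraint \eqref{percons} applied to the optimiser $\mathbf{u}^*(x)$ immediately gives
\[
L(\mathbf{x}^*_{0:M-1}(x), \mathbf{u}^*(x)) \le L(\mathbf{z}^*_{0:M-1}(x), \mathbf{v}^*_{0:M-1}(x)).
\]
For the reverse inequality, I use the principle of optimality for $\mathbb{P}_N(x)$. Concatenate $\mathbf{u}^*(x)$ with the tail $\mathbf{v}^*_{M:N-1}(x)$. By the terminal equality \eqref{mequ}, $x^*(M;x)=z^*(M;x)$, so the tail produces exactly the states $z^*(M;x),\dots,z^*(N;x)$. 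These satisfy the state constraints and reach $\mathbb{X}_f$. The first $M$ steps satisfy the state and input constraints because $\mathbf{u}^*(x)\in\mathcal{U}^a_M(x)$. The concatenated sequence is therefore in $\mathcal{V}_N(x)$.

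Its cost is
\[
L(\mathbf{x}^*_{0:M-1}(x),\mathbf{u}^*(x)) + \sum_{i=M}^{N-1}\ell(z^*(i;x),v^*(i;x)) + V_f(z^*(N;x)).
\]
Optimality of $\mathbf{v}^*(x)$ says this is at least $V_N^*(x)$. Cancelling the common tail terms yields $L(\mathbf{x}^*,\mathbf{u}^*)\ge L(\mathbf{z}^*,\mathbf{v}^*)$, and hence equality. The case $M=N$ is handled the same way, with no tail and only $V_f(z^*(N;x))$ shared. Existence of the optimiser $\mathbf{u}^*(x)$ is taken from Proposition~\ref{lem_fea} together with compactness and continuity. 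In fact, the argument applies verbatim to every feasible $\mathbf{u}\in\mathcal{U}^a_M(x)$, not only the optimiser, and I will record it in that form.

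For part (b), I use that strengthened version. Every $\mathbf{u}\in\mathcal{U}^a_M(x)$ produces, after concatenation, a feasible sequence for $\mathbb{P}_N(x)$ whose cost equals $V_N^*(x)$, so each such concatenation is an optimiser of $\mathbb{P}_N(x)$. Strict convexity of $\mathbb{P}_N(x)$ means a strictly convex cost over a convex feasible set, which gives a unique minimiser. Hence each concatenation equals $\mathbf{v}^*(x)$, and in particular $u(0;x)=v^*(0;x)$. This gives $\mathbb{S}(x)\subseteq\{v^*(0;x)\}$.

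The reverse inclusion holds because $\mathbf{v}^*_{0:M-1}(x)$ is itself feasible for $\mathbb{P}_{f,M}(x)$: it meets \eqref{percons} with equality and \eqref{mequ} by construction. Therefore $\mathbb{S}(x)=\{v^*(0;x)\}$ and $u^*(0;x)=v^*(0;x)$, whatever $u_{\mathrm{ext}}$ is.

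The main point needing care is the interpretation of "strictly convex". The uniqueness step needs the minimiser of $\mathbb{P}_N(x)$ to be unique as a decision sequence $\mathbf{v}$. I would state explicitly that this is what the assumption provides, for example a strictly convex cost in $\mathbf{v}$ over a convex feasible set $\mathcal{V}_N(x)$, as in linear dynamics with positive definite quadratic costs. The rest is a routine principle-of-optimality argument.
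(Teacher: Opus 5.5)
Your proposal is correct and follows the paper's argument. You concatenate with the nominal tail $\mathbf{v}^*_{M:N-1}(x)$, which is legitimate by \eqref{mequ}, and use optimality of $\mathbf{v}^*(x)$ to get the inequality reverse to \eqref{percons}. You then invoke uniqueness of the minimiser under strict convexity. The paper phrases (a) as a contradiction rather than a direct sandwich, but the content is the same.

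Running the argument for every $\mathbf{u}\in\mathcal{U}^a_M(x)$ rather than only the optimiser, and adding the reverse inclusion via Proposition~\ref{lem_fea}, is a worthwhile tightening. The paper's proof only shows $u^*(0;x)=v^*(0;x)$ and leaves the singleton claim $\mathbb{S}(x)=\{v^*(0;x)\}$ implicit.
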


\begin{proposition}\label{alemma}
Suppose that $x \in \mathcal{X}_N$ and $a_1 \ge a_2 \ge 0$.
Then, the corresponding S\textsuperscript{2}-sets satisfy the nested inclusion
\[
    \mathbb{S}^{a_2}(x) \subseteq \mathbb{S}^{a_1}(x),
\]
where $\mathbb{S}^{a_i}(x)$ denotes the S\textsuperscript{2}-set associated with the parameter $a = a_i$.
\end{proposition}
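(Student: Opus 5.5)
The plan is to prove the stronger inclusion $\mathcal{U}^{a_2}_M(x) \subseteq \mathcal{U}^{a_1}_M(x)$ of the feasible sets of $\mathbb{P}_{f,M}(x)$. The claimed nesting of S\textsuperscript{2}-sets then follows at once from definition~\eqref{eq:admissible_controls_sim}, since $\mathbb{S}^{a_i}(x)$ is the image of $\mathcal{U}^{a_i}_M(x)$ under the map $\mathbf{u}\mapsto u(0;x)$. Throughout, fix $x\in\mathcal{X}_N$. Then $\mathbb{P}_N(x)$ is feasible, so $\mathbf{v}^*(x)$ and $\mathbf{z}^*(x)$ are well defined.

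The key observation concerns which parts of $\mathcal{U}^a_M(x)$ depend on $a$. The state and input constraints $x(i;x)\in\mathbb{X}$, $u(i;x)\in\mathbb{U}$ for $i\in\mathbb{I}_{0:M-1}$ do not involve $a$. Nor does the terminal equality~\eqref{mequ}, since $z^*(M;x)$ comes only from the nominal MPC. The reference quantity $L(\mathbf{z}^*_{0:M-1}(x),\mathbf{v}^*_{0:M-1}(x))$ in~\eqref{percons} is likewise independent of $a$ and of $\mathbf{u}$. Hence only the performance constraint~\eqref{percons} changes with $a$, and $a$ enters there only through the term $-a\,\ell(x(0;x),u(0;x))$.

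Now take any $\mathbf{u}\in\mathcal{U}^{a_2}_M(x)$. It satisfies all $a$-independent constraints, and~\eqref{percons} holds with $a=a_2$. By Assumption~\ref{assum:continuity}, $\ell\ge 0$ on $\mathbb{X}\times\mathbb{U}$, and $(x(0;x),u(0;x))=(x,u(0;x))\in\mathbb{X}\times\mathbb{U}$. Together with $a_1\ge a_2$ this gives
\[
-a_1\,\ell(x(0;x),u(0;x)) \le -a_2\,\ell(x(0;x),u(0;x)).
\]
Therefore the left-hand side of~\eqref{percons} with $a=a_1$ is bounded above by the same expression with $a=a_2$, which is $\le 0$. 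So $\mathbf{u}\in\mathcal{U}^{a_1}_M(x)$, and projecting onto the first input yields $\mathbb{S}^{a_2}(x)\subseteq\mathbb{S}^{a_1}(x)$.

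No real obstacle is expected; the argument is a monotonicity check. The only point needing care is that the nonnegativity of $\ell$ is legitimately available: the constraint set itself forces $x(0;x)=x\in\mathbb{X}$ and $u(0;x)\in\mathbb{U}$, which is exactly the domain on which Assumption~\ref{assum:continuity} guarantees $\ell\ge 0$. Nonemptiness of the sets is not needed for the inclusion, though Proposition~\ref{lem_fea} confirms that neither is vacuous.
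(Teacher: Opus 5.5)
Your proof is correct and follows essentially the same route as the paper: both show that any $\mathbf{u}\in\mathcal{U}^{a_2}_M(x)$ also lies in $\mathcal{U}^{a_1}_M(x)$ because $a_2\,\ell(x(0;x),u(0;x)) \le a_1\,\ell(x(0;x),u(0;x))$, and then project onto the first input. Your explicit appeal to $\ell\ge 0$ on $\mathbb{X}\times\mathbb{U}$ states outright a step that the paper uses without comment.
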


\begin{proposition}\label{M1lemma}
Suppose that $x \in \mathcal{X}_N$ and $M = 1$.  
Assume that the mapping $u \mapsto f(x,u)$ is injective on $\mathbb{U}$.  
Then, the filtered control law reduces to the nominal MPC action.  
In particular,
\[
    u(0;x) = u^*(0;x) = v^*(0;x),
\]
and the S\textsuperscript{2}-set degenerates to a singleton:
\[
    \mathbb{S}(x) = \{ v^*(0;x)\}.
\]
\end{proposition}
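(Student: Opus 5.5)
With $M=1$, the filtering problem $\mathbb{P}_{f,1}(x)$ has a single decision variable $u(0;x)$. Its constraints are then $x\in\mathbb{X}$, $u(0;x)\in\mathbb{U}$, the performance constraint \eqref{percons}, and the terminal constraint \eqref{mequ}, which reads
\[
x(1;x) = f(x,u(0;x)) = z^*(1;x) = f(x,v^*(0;x)).
\]
The plan is to show that \eqref{mequ} alone already pins $u(0;x)$ down uniquely. Feasibility then forces $\mathbb{S}(x)=\{v^*(0;x)\}$, and the optimiser must be $u^*(0;x)=v^*(0;x)$ regardless of $u_{\mathrm{ext}}$.

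\textbf{Step 1 (the nominal input is admissible).} By Proposition~\ref{lem_fea}, $\mathcal{U}^a_1(x)\neq\emptyset$ for $x\in\mathcal{X}_N$. To be concrete, check directly that $\mathbf{u}=\{v^*(0;x)\}$ lies in $\mathcal{U}^a_1(x)$:
\begin{itemize}
\item $v^*(0;x)\in\mathbb{U}$ and $x\in\mathbb{X}$, since $\mathbf{v}^*(x)\in\mathcal{V}_N(x)$;
\item \eqref{mequ} holds trivially;
\item the left-hand side of \eqref{percons} becomes $\ell(x,v^*(0;x))-\ell(x,v^*(0;x))-a\ell(x,v^*(0;x)) = -a\ell(x,v^*(0;x))\le 0$, because $a\ge 0$ and $\ell\ge 0$.
\end{itemize}
Hence $v^*(0;x)\in\mathbb{S}(x)$.

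\textbf{Step 2 (uniqueness via injectivity).} Take any $\mathbf{u}=\{u(0;x)\}\in\mathcal{U}^a_1(x)$. Then $u(0;x)\in\mathbb{U}$ and $f(x,u(0;x))=f(x,v^*(0;x))$, with both arguments in $\mathbb{U}$. Injectivity of $u\mapsto f(x,u)$ on $\mathbb{U}$ gives $u(0;x)=v^*(0;x)$. Therefore $\mathbb{S}(x)\subseteq\{v^*(0;x)\}$, and combined with Step 1, $\mathbb{S}(x)=\{v^*(0;x)\}$.

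\textbf{Step 3 (conclusion).} The feasible set of $\mathbb{P}_{f,1}(x)$ is a singleton, so the minimiser of $|u_{\mathrm{ext}}-u(0;x)|^2$ over it is that single point. By \eqref{equ_finalcontrol}, this gives $u=u^*(0;x)=v^*(0;x)$. Equivalently, via \eqref{final}, projecting onto a singleton returns the singleton.

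The only delicate point is Step 2. It requires that the injectivity hypothesis be applied to inputs that both lie in $\mathbb{U}$. This is guaranteed because the input constraint $u(0;x)\in\mathbb{U}$ is part of $\mathcal{U}^a_1(x)$ and $v^*(0;x)\in\mathbb{U}$ by nominal feasibility. Notably, the argument uses neither the value of $a$ nor any convexity assumption, in contrast with Proposition~\ref{a0lemma}(b).
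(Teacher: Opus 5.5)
Your proof is correct and follows the same route as the paper: with $M=1$ the terminal constraint \eqref{mequ} becomes $f(x,u(0;x))=f(x,v^*(0;x))$, and injectivity of $u\mapsto f(x,u)$ on $\mathbb{U}$ forces $u(0;x)=v^*(0;x)$. The paper leaves two points implicit (the second via Proposition~\ref{lem_fea}), and you make both explicit: that both inputs lie in $\mathbb{U}$, and, in your Step~1, that $\{v^*(0;x)\}$ is admissible, so $\mathbb{S}(x)$ equals the singleton rather than merely being contained in it.
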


\begin{proposition}\label{mlemma}
Suppose that $x \in \mathcal{X}_N$ and $i\ge j$, where $i,j\in\mathbb{I}_{1:N}$.
Then, the corresponding S\textsuperscript{2}-sets satisfy the nested inclusion
\[
\mathbb{S}_j(x) \subseteq \mathbb{S}_i(x),
\]
where $\mathbb{S}_i(x)$ denotes the S\textsuperscript{2}-set associated with prediction horizon  $M = i$.
\end{proposition}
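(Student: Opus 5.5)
We prove Proposition~\ref{mlemma} by taking an arbitrary $\bar u \in \mathbb{S}_j(x)$ and showing that it also lies in $\mathbb{S}_i(x)$. By~\eqref{eq:admissible_controls_sim}, there is a sequence $\mathbf{u}^{(j)} = \{u(0;x),\dots,u(j-1;x)\} \in \mathcal{U}^a_j(x)$ with $u(0;x)=\bar u$. Let $x(\cdot;x)$ denote its state trajectory. We extend this sequence to length $i$ by appending the tail of the nominal MPC solution:
\[
\mathbf{u}^{(i)} := \{u(0;x),\dots,u(j-1;x),\,v^*(j;x),\dots,v^*(i-1;x)\}.
\]
The goal is to show $\mathbf{u}^{(i)} \in \mathcal{U}^a_i(x)$. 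Its first element is still $\bar u$, so this gives $\bar u\in\mathbb{S}_i(x)$. The case $i=j$ is trivial, so assume $i>j$.

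\textbf{Trajectory and constraint satisfaction.} The terminal constraint~\eqref{mequ} for horizon $j$ gives $x(j;x)=z^*(j;x)$. Since the appended inputs are exactly $v^*(j;x),\dots,v^*(i-1;x)$ and the dynamics~\eqref{eqsystem} are deterministic, induction gives that the extended trajectory satisfies $x(k;x)=z^*(k;x)$ for all $k\in\mathbb{I}_{j:i}$.

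From this we read off the constraints. For $k\le j-1$, the state and input constraints hold by feasibility of $\mathbf{u}^{(j)}$. For $k\in\mathbb{I}_{j:i-1}$, they hold because $\mathbf{v}^*(x)\in\mathcal{V}_N(x)$ and $i\le N$, so $z^*(k;x)\in\mathbb{X}$ and $v^*(k;x)\in\mathbb{U}$. The terminal equality $x(i;x)=z^*(i;x)$ for horizon $i$ follows directly from the induction.

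\textbf{Performance constraint.} This is the step needing care, since the cumulative cost $L$ depends on the horizon. Split the horizon-$i$ sums at index $j$. The stage costs for $k\in\mathbb{I}_{j:i-1}$ coincide in the filter and nominal sums, because the states and inputs are identical there, so they cancel. What remains is
\[
\sum_{k=0}^{j-1}\ell(x(k;x),u(k;x)) - \sum_{k=0}^{j-1}\ell(z^*(k;x),v^*(k;x)) - a\,\ell(x,\bar u) \le 0,
\]
which is exactly~\eqref{percons} for horizon $j$ and holds by assumption. The term $a\ell(x(0;x),u(0;x))$ is the same in both horizons because $x(0;x)=x$ and $u(0;x)=\bar u$.

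Hence $\mathbf{u}^{(i)}\in\mathcal{U}^a_i(x)$ and $\bar u\in\mathbb{S}_i(x)$. Since $\bar u$ was arbitrary, $\mathbb{S}_j(x)\subseteq\mathbb{S}_i(x)$. No result beyond the feasibility of $\mathbf{v}^*(x)$ for $x\in\mathcal{X}_N$ is needed. Proposition~\ref{lem_fea} only ensures the sets are nonempty, which the inclusion itself does not require.
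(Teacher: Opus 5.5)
Your proof is correct and follows the paper's argument. Both take a horizon-$j$ admissible sequence, append the nominal tail $v^*(\cdot;x)$, use the terminal equality so the appended states coincide with $z^*(\cdot;x)$, and note that the extra stage costs cancel in the performance constraint; the only difference is that the paper appends one input to show $\mathbb{S}_i(x)\subseteq\mathbb{S}_{i+1}(x)$ and leaves the chaining implicit, whereas you append $v^*(j;x),\dots,v^*(i-1;x)$ at once and check the constraints and the cancellation explicitly.
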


\begin{remark}
Proposition~\ref{lem_fea} shows that the filter problem $\mathbb{P}_{f,M}(x)$ does not require any additional assumptions beyond those of the nominal MPC problem $\mathbb{P}_N(x)$. 
As long as $x \in \mathcal{X}_N$, the filter remains feasible. 
Therefore, classical techniques for enlarging the region of attraction $\mathcal{X}_N$ (e.g.,~\cite{limon2005enlarging}) can be directly employed to extend the safe and stable operating region.
\end{remark}

\begin{remark}
Propositions~\ref{a0lemma}, \ref{alemma}, \ref{M1lemma}, and~\ref{mlemma} jointly characterise how the design parameters $a$ and $M$ determine the geometry of the S\textsuperscript{2}-set.  
Proposition~\ref{a0lemma} shows that when $a=0$, the S\textsuperscript{2}-set collapses to the nominal MPC action.  
Proposition~\ref{alemma} establishes that the S\textsuperscript{2}-set is monotone in $a$, with larger values producing larger sets.  
Similarly, Propositions~\ref{M1lemma} and~\ref{mlemma} show that the prediction horizon $M$ also induces a monotone expansion of the S\textsuperscript{2}-set, and that for $M=1$ the filtered input coincides with the nominal MPC control.  
Taken together, these results demonstrate that increasing either $a$ or $M$ reduces conservatism by enlarging the S\textsuperscript{2}-set, whereas smaller values make the filter behave more like the nominal MPC law.  
This trend is further illustrated in the numerical examples presented later.
\end{remark}

To more clearly illustrate the geometry of the S\textsuperscript{2}-set $\mathbb{S}(x)$, we consider a representative example based on a general linear
system. Although it was previously defined only implicitly through the two OCPs, in this example, we explicitly derive the analytical description of $\mathbb{S}(x)$, making its structure transparent.

\begin{proposition}\label{linearsys}
Suppose that $f(x,u) = A x + B u$, where 
$A \in \mathbb{R}^{n\times n}$ and 
$B \in \mathbb{R}^{n\times m}$ are system matrices, and $(A,B)$ is controllable. 
The nominal MPC is designed with the quadratic stage and terminal costs, given by
\[
    \ell(x,u) = x^\top Q x + u^\top R u,
    \qquad
    V_f(x) = x^\top P x,
\]
where $Q > 0$, $R > 0$, and $P>0$ is the positive definite solution of the discrete-time algebraic Riccati equation associated with $(A,B,Q,R)$. 
We further assume that the safety constraints in both OCPs, as well as the terminal constraint of the first OCP, are inactive. Then, the admissible input set $\mathcal{U}^a_M(x)$ is the intersection of an affine subspace and a quadratic set, i.e., an ellipsoid sliced by a linear constraint:
\[
\begin{aligned}
  \mathcal{U}^a_M(x)
    =
    \Bigl\{
        \mathbf{u} \in \mathbb{R}^{Mm}
        \,\Big|\,
        &A_{\mathrm{eq}} \mathbf{u} = b_{\mathrm{eq}}x,\mathbf{u}^\top H\,\mathbf{u}
        + 2 x^\top F\,\mathbf{u}
        + x^\top G x \le0
    \Bigr\},   
\end{aligned}
\]
and hence,
\[
 \mathbb{S}(x) =e_1\mathcal{U}^a_M(x),
\]
where $A_{\mathrm{eq}}$, $b_{\mathrm{eq}}$, $H$, $F$, $G$, and $e_1$ are constant matrices 
whose explicit expressions are provided in Appendix A.  
\end{proposition}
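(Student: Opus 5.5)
The plan is to write both constraints of $\mathbb{P}_{f,M}(x)$ in terms of the stacked input $\mathbf{u}\in\mathbb{R}^{Mm}$ and the current state $x$, using the standard batch (condensed) prediction form of the linear system. Because the safety constraints in both OCPs and the terminal constraint of $\mathbb{P}_N(x)$ are assumed inactive, only the performance constraint \eqref{percons} and the terminal equality \eqref{mequ} remain in $\mathcal{U}^a_M(x)$. The nominal MPC then reduces to an unconstrained LQ problem with Riccati terminal weight $P$. By the principle of optimality, its solution is the LQR feedback $v^*(i;x)=Kz^*(i;x)$ with $K=-(R+B^\top PB)^{-1}B^\top PA$. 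Hence $z^*(i;x)=(A+BK)^i x$, and the nominal trajectory is linear in $x$.

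\textbf{Terminal equality.} With $x(i;x)=A^i x+\sum_{j=0}^{i-1}A^{i-1-j}Bu(j;x)$, stack $\mathbf{x}_{0:M-1}=\Phi x+\Gamma\mathbf{u}$. Here $\Phi$ is the block column of $A^i$, $i\in\mathbb{I}_{0:M-1}$, and $\Gamma$ is the block lower-triangular Toeplitz matrix built from $A^{k}B$. The terminal constraint $x(M;x)=z^*(M;x)$ becomes
\[
[A^{M-1}B,\ \dots,\ AB,\ B]\,\mathbf{u}=\bigl((A+BK)^M-A^M\bigr)x,
\]
which defines $A_{\mathrm{eq}}$ and $b_{\mathrm{eq}}$. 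Controllability of $(A,B)$ is not needed for the form itself; it ensures $P>0$ exists, and for $M\ge n$ it makes $A_{\mathrm{eq}}$ full row rank.

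\textbf{Performance constraint.} Write
\[
L(\mathbf{x}_{0:M-1},\mathbf{u})=(\Phi x+\Gamma\mathbf{u})^\top\bar Q(\Phi x+\Gamma\mathbf{u})+\mathbf{u}^\top\bar R\,\mathbf{u},
\]
with $\bar Q=I_M\otimes Q$ and $\bar R=I_M\otimes R$. The nominal term $L(\mathbf{z}^*_{0:M-1},\mathbf{v}^*_{0:M-1})$ equals $x^\top W x$ with
\[
W=\sum_{i=0}^{M-1}\bigl((A+BK)^i\bigr)^\top(Q+K^\top RK)(A+BK)^i .
\]
The term $a\ell(x(0;x),u(0;x))=a(x^\top Qx+\mathbf{u}^\top e_1^\top Re_1\mathbf{u})$ uses $e_1=[I_m,\ 0,\dots,0]$. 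Collecting terms, \eqref{percons} reads $\mathbf{u}^\top H\mathbf{u}+2x^\top F\mathbf{u}+x^\top Gx\le 0$, where
\[
H=\Gamma^\top\bar Q\Gamma+\bar R-a\,e_1^\top Re_1,\qquad F=\Phi^\top\bar Q\Gamma,\qquad G=\Phi^\top\bar Q\Phi-W-aQ .
\]
The identity $\mathbb{S}(x)=e_1\mathcal{U}^a_M(x)$ then follows directly from \eqref{eq:admissible_controls_sim}, since $u(0;x)=e_1\mathbf{u}$.

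\textbf{Main obstacle.} The delicate point is justifying the word ``ellipsoid''. This requires $H\succ 0$, at least on the null space of $A_{\mathrm{eq}}$, whereas subtracting $a\,e_1^\top Re_1$ could destroy definiteness.

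The argument is that $\Gamma^\top\bar Q\Gamma\succeq 0$, and the first diagonal block of $\bar R$ is $R$. So the first block of $H$ is at least $(1-a)R+B^\top QB\cdot(\cdot)$ terms, and $H\succeq \bar R-a\,e_1^\top Re_1\succ 0$ whenever $a<1$, which is the range prescribed in Algorithm~\ref{alg:ps2f_simple}. For $a\ge1$ I would state the set as a general quadric slice, or restrict to $a<1$. Nonemptiness of the sliced ellipsoid is given by Proposition~\ref{lem_fea}, since $\mathbf{v}^*_{0:M-1}(x)$ is feasible.
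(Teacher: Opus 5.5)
Your proposal is correct and follows essentially the same route as the paper: reduce the nominal MPC to the LQR solution, then use the condensed dynamics to write \eqref{mequ} as $A_{\mathrm{eq}}\mathbf{u}=b_{\mathrm{eq}}x$ and \eqref{percons} as a quadratic inequality with the same $H$ and $F$. Your finite sum $W$ coincides with the paper's telescoped term $P-A_{\mathrm{LQR}}^{M\top}PA_{\mathrm{LQR}}^{M}$ via the closed-loop identity $P=A_{\mathrm{LQR}}^\top PA_{\mathrm{LQR}}+Q+K_{\mathrm{LQR}}^\top RK_{\mathrm{LQR}}$, and your check that $H\succeq \bar R-a\,e_1^\top Re_1\succ 0$ for $a<1$ is a genuine addition: the paper asserts the ``ellipsoid'' description without verifying definiteness, which can fail for $a\ge 1$ (e.g.\ $H=(1-a)R$ when $M=1$).
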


\subsection{Theoretical Analysis}

In this subsection, we analyse the closed-loop properties of the proposed PS\textsuperscript{2}F framework. 
We begin with the nominal case, where all parameters are held constant, and then extend the analysis to the mode-scheduling case, in which the parameters vary over time. 
For completeness, the proofs of all theoretical results presented in this subsection are provided in Appendix~\ref{proof_theors}.

\begin{theorem}\label{thm:Recursive feasibility and stability}
Suppose that Assumptions~\ref{assum:continuity}, \ref{assum:constraint_sets}, and 
\ref{assum:cost_bounds} hold, and that the terminal set $\mathbb{X}_f$ contains 
the origin in its interior. Let the initial state satisfy $x(0) \in \mathcal{X}_N$. 
Then, for any external command $u_{\mathrm{ext}}(k)$, if $a \in [0,1)$, the  system~\eqref{eqsystem} under the control law~\eqref{equ_finalcontrol} 
satisfies the following properties:
\begin{enumerate}[label=(\alph*)]
    \item 
    The safety constraints are satisfied for all $k\in\mathbb{I}_{\ge 0}$.
    \item The closed-loop system is asymptotically stable.
\end{enumerate}
\end{theorem}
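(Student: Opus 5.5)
The plan is to use the nominal MPC optimal value function $V_N^*$ as a Lyapunov function for the closed loop under the filtered law~\eqref{equ_finalcontrol}, and to show that it is also an invariant-set certificate for $\mathcal{X}_N$.

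\textbf{Recursive feasibility (part (a)).} Fix $x\in\mathcal{X}_N$ and let $\mathbf{u}^*(x)$ be the optimiser of $\mathbb{P}_{f,M}(x)$. Proposition~\ref{lem_fea} guarantees that it exists. I will build a candidate sequence for $\mathbb{P}_N(x^+)$, where $x^+=f(x,u^*(0;x))$, by concatenating three pieces:
\[
\tilde{\mathbf v}=\{u^*(1;x),\dots,u^*(M-1;x),\,v^*(M;x),\dots,v^*(N-1;x),\,\kappa_f\},
\]
where $\kappa_f$ is the terminal input supplied by Assumption~\ref{assum:cost_bounds}(a).

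\begin{itemize}
\item The first $M-1$ entries satisfy the constraints, because they are constrained in $\mathcal{U}^a_M(x)$.
\item The terminal equality~\eqref{mequ} gives $x^*(M;x)=z^*(M;x)$. Hence the nominal tail $v^*(M{:}N-1;x)$ is admissible from that point, and it drives the state into $\mathbb{X}_f$ at step $N$.
\item The final input $\kappa_f$ keeps the state in $\mathbb{X}_f$.
\end{itemize}

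So $\tilde{\mathbf v}\in\mathcal{V}_N(x^+)$, which gives $x^+\in\mathcal{X}_N\subseteq\mathbb{X}$. The applied input $u^*(0;x)$ lies in $\mathbb{U}$ by construction, and $x\in\mathbb{X}$. Induction on $k$ then yields (a). If $M=N$, the middle block is empty and the argument is unchanged.

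\textbf{Decrease of $V_N^*$ (part (b)).} Next I evaluate the cost of the candidate:
\[
V_N(x^+,\tilde{\mathbf v})=\sum_{i=1}^{M-1}\ell(x^*(i),u^*(i))+\sum_{i=M}^{N-1}\ell(z^*(i),v^*(i))+\ell(z^*(N),\kappa_f)+V_f(f(z^*(N),\kappa_f)).
\]
By Assumption~\ref{assum:cost_bounds}(a), the last two terms are at most $V_f(z^*(N))$. By the performance constraint~\eqref{percons},
\[
\sum_{i=1}^{M-1}\ell(x^*(i),u^*(i))\le L(\mathbf z^*_{0:M-1},\mathbf v^*_{0:M-1})-(1-a)\ell(x,u^*(0;x)).
\]
Combining these two bounds gives
\[
V_N^*(x^+)\le V_N(x^+,\tilde{\mathbf v})\le V_N^*(x)-(1-a)\ell(x,u^*(0;x))\le V_N^*(x)-(1-a)\alpha_1(|x|).
\]
This is where $a<1$ is essential.

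\textbf{Lyapunov bounds.} The lower bound is $V_N^*(x)\ge\ell(x,v^*(0;x))\ge\alpha_1(|x|)$. The upper bound $V_N^*(x)\le\beta(|x|)$ on $\mathcal{X}_N$ is Proposition~\ref{lem:VN_upper}, which applies because $\mathbb{X}_f$ contains the origin in its interior. Together with the decrease inequality and the positive invariance of $\mathcal{X}_N$, the standard Lyapunov theorem for discrete-time systems on a positively invariant set (e.g.\ \cite[Thm.~B.13]{rawlings2020model}) gives asymptotic stability of the origin, with region of attraction $\mathcal{X}_N$.

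\textbf{Main obstacle.} Two points need care.

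\begin{enumerate}
\item The closed loop is a time-varying, possibly set-valued system, because $u_{\mathrm{ext}}(k)$ is arbitrary. The Lyapunov bounds above hold uniformly in $u_{\mathrm{ext}}$, so I will invoke a version of the theorem for difference inclusions, or simply argue uniform asymptotic stability directly from the three $\mathcal{K}_\infty$ bounds.
\item Existence of a minimiser of $\mathbb{P}_{f,M}(x)$ needs compactness and closedness of $\mathcal{U}^a_M(x)$. This follows from continuity of $f$ and $\ell$ and compactness of $\mathbb{U}$. However, if the minimiser is not unique, the argument above holds for any selection, so it should be stated for an arbitrary optimiser.
\end{enumerate}
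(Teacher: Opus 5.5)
Your proposal is correct and follows essentially the same route as the paper. It uses the same shifted candidate $\{u^*(1{:}M-1;x),\,v^*(M{:}N-1;x),\,\kappa_f\}$ for recursive feasibility, and the same combination of \eqref{percons} and Assumption~\ref{assum:cost_bounds}(a) to obtain $V_N^*(x^+)\le V_N^*(x)-(1-a)\alpha_1(|x|)$ together with $\alpha_1(|x|)\le V_N^*(x)\le\beta(|x|)$. Your extra remarks on uniformity with respect to the arbitrary $u_{\mathrm{ext}}(k)$ and on existence of a minimiser of $\mathbb{P}_{f,M}(x)$ cover points the paper leaves implicit, and they only tighten the argument.
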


Based on Theorem~\ref{thm:Recursive feasibility and stability}, we now extend the proposed framework to the time-varying case, in which the design parameters $a(k)$ and $M(k)$ evolve over time (see Algorithm~\ref{alg:ps2f_complex}). 
This extension enables the controller to adapt its behaviour dynamically, i.e., prioritising safety during exploration and progressively enforcing stability during exploitation. The following theorem formalises these properties.

\begin{theorem}\label{thm:timevarying}
Suppose that Assumptions~\ref{assum:continuity}, \ref{assum:constraint_sets}, and 
\ref{assum:cost_bounds} hold, and that the terminal set $\mathbb{X}_f$ contains 
the origin in its interior. Let the initial state satisfy $x(0) \in \mathcal{X}_N$.   Then, for any given external command $u_{\text{ext}}(k)$, if $M(k)\in\mathbb{I}_{1:N}$ and $a(k)\in[0,\infty)$, the 
system~\eqref{eqsystem} under the control law~\eqref{equ_finalcontrol} 
satisfies the following properties:
\begin{enumerate}[label=(\alph*)] 
    \item The safety constraints are satisfied for all $k\in\mathbb{I}_{\ge 0}$.
    \item Furthermore, if there exists a time instant $K_s\in\mathbb{I}_{\ge 0}$  such that $\sup_{k\ge K_s} a(k) < 1 $, then the closed-loop system is asymptotically stable for all $k \in\mathbb{I}_{\ge K_s}$.
\end{enumerate}
\end{theorem}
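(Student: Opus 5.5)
The plan is to reuse the Lyapunov argument behind Theorem~\ref{thm:Recursive feasibility and stability}, but isolate which parts hold for arbitrary $a(k)\ge 0$ and $M(k)\in\mathbb{I}_{1:N}$ and which need $a(k)<1$. The key object is the nominal value function $V_N^*$ evaluated along the closed loop. Feasibility and safety will come from an invariance property of $\mathcal{X}_N$ that never uses the performance constraint~\eqref{percons}. Stability will come from a one-step decrease of $V_N^*$ whose rate is $(1-a(k))\ell(x,u)$.

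For part (a), I argue by induction that $x(k)\in\mathcal{X}_N$ for all $k$. Suppose $x(k)\in\mathcal{X}_N$. Proposition~\ref{lem_fea} holds for every $a\ge0$ and every $M\in\mathbb{I}_{1:N}$, since $\mathbf{v}^*_{0:M-1}(x)$ is always feasible for $\mathbb{P}_{f,M}(x)$. Hence the filter problem at time $k$ is feasible and yields $u(k)\in\mathbb{U}$ with $x(k)\in\mathbb{X}$.

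The successor $x^+=f(x,u^*(0;x))$ can be driven to $z^*(M;x)$ in $M-1$ steps using $u^*(1{:}M-1;x)$, because the terminal constraint~\eqref{mequ} forces $x^*(M;x)=z^*(M;x)$. From $z^*(M;x)$ the tail $v^*(M{:}N-1;x)$ reaches $\mathbb{X}_f$, and Assumption~\ref{assum:cost_bounds}(a) supplies one further admissible step inside $\mathbb{X}_f$. Concatenating these pieces gives an $N$-step sequence
\[
\tilde{\mathbf{v}}=\{u^*(1;x),\dots,u^*(M-1;x),v^*(M;x),\dots,v^*(N-1;x),\bar u\}
\]
that is admissible from $x^+$. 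Thus $x^+\in\mathcal{X}_N$, and all constraints hold for every $k$. None of this used~\eqref{percons}, so it holds for any $a(k)\ge0$ and any time-varying $M(k)$: the horizon $M(k+1)$ at the next step only affects the next filter problem, which is again feasible by Proposition~\ref{lem_fea}.

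For part (b), I compare $V_N^*(x^+)$ with $V_N(x^+,\tilde{\mathbf v})$. The cost of $\tilde{\mathbf v}$ equals
\[
\sum_{i=1}^{M-1}\ell(x^*(i),u^*(i))+\sum_{i=M}^{N-1}\ell(z^*(i),v^*(i))+\ell(z^*(N),\bar u)+V_f(f(z^*(N),\bar u)).
\]
Using~\eqref{percons}, the first sum is at most
\[
L(\mathbf z^*_{0:M-1},\mathbf v^*_{0:M-1})-(1-a)\ell(x,u^*(0;x)).
\]
Applying Assumption~\ref{assum:cost_bounds}(a) to the terminal terms then gives
\[
V_N^*(x^+)\le V_N^*(x)-(1-a(k))\ell(x,u(k)).
\]
For $k\ge K_s$ set $\bar a:=\sup_{k\ge K_s}a(k)<1$, which yields
\[
V_N^*(x(k+1))-V_N^*(x(k))\le-(1-\bar a)\alpha_1(|x(k)|).
\]
Combined with the lower bound $V_N^*\ge\alpha_1(|x|)$ and the upper bound $V_N^*\le\beta(|x|)$ from Proposition~\ref{lem:VN_upper}, $V_N^*$ is a Lyapunov function on the positively invariant set $\mathcal{X}_N$. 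The standard discrete-time Lyapunov theorem then gives asymptotic stability from time $K_s$ onward, uniformly in $u_{\mathrm{ext}}$ and in $M(k)$, with initial condition $x(K_s)\in\mathcal{X}_N$ guaranteed by part (a).

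The main obstacle I expect is the bookkeeping in the decrease inequality when $M(k)$ and $a(k)$ change between steps. The candidate $\tilde{\mathbf v}$ and the use of~\eqref{percons} must involve only the time-$k$ quantities $M(k)$ and $a(k)$, so that $V_N^*$, which does not depend on $M$ or $a$, serves as a common Lyapunov function. I also need to check the edge case $M=1$, where the middle segment is empty and $x^+=z^*(1;x)$.
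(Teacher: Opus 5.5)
Your proposal is correct and follows essentially the same route as the paper. The same shifted candidate sequence (filter tail, then nominal tail, then one terminal step from Assumption~\ref{assum:cost_bounds}(a)) gives recursive feasibility, and the same use of~\eqref{percons} yields $V_N^*(x(k+1))\le V_N^*(x(k))-(1-a(k))\ell(x(k),u(k))$, so $V_N^*$ is a common Lyapunov function after $K_s$. Your explicit remark that part (a) never uses~\eqref{percons}, and therefore holds for any $a(k)\ge 0$ and time-varying $M(k)$, spells out what the paper leaves implicit in ``the same manner''.
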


\begin{algorithm}
\caption{\textbf{PS\textsuperscript{2}F with Mode Scheduling}}
\label{alg:ps2f_complex}
\begin{algorithmic}[1]
\State \textbf{Offline:}
Specify prediction horizons $N$  and $M(k)\in\mathbb{I}_{1:N}$;
cost functions $\ell(\cdot,\cdot)$ and $V_f(\cdot)$; constraint sets $\mathbb{X},\mathbb{U},\mathbb{X}_f$; 
and mode-scheduling parameter $a(k)\!\ge\!0$.
\State \textbf{Initialisation:} Set time $k\leftarrow0$ and measure $x(0)$.

  \State \textbf{Step One-Nominal MPC:}
  Solve $\mathbb{P}_N\big(x(k)\big)$ to obtain $\mathbf{v}^*(x(k))$ and $\mathbf{z}^*(x(k))$. 
 
  \State \textbf{Step Two-Receive external command:}
  Obtain $u_{\mathrm{ext}}(k)$ from any goal-oriented controller.

  \State \textbf{Step Three-Mode scheduling:}
  Adjust $a(k)$ according to the desired operation mode:
  \If{safe exploration phase}
    \State Choose a large $a(k) \ge 1$ to prioritise constraint satisfaction (safety-first mode).
  \ElsIf{stable exploitation phase}
    \State Choose a small $a(k) \le 1 - \bar{a}$ (with $\bar{a}\in(0,1]$) to enhance asymptotic stability.
  \EndIf

  \State \textbf{Step Four-PS\textsuperscript{2}F:} Receive the scheduled parameters $a(k)$ and $M(k)$.
  Solve $\mathbb{P}_{f,M}\big(x(k)\big)$  to obtain 
  $\mathbf{u}^*(x(k))$ and $\mathbf{x}^*(x(k))$. 
  Set the filtered control input
  \[
  u(k)\leftarrow u^*(0;x(k)) \quad \text{(cf.\ \eqref{equ_finalcontrol}).}
  \]

  \State \textbf{Step Five-Apply and propagate:}
  Apply $u(k)$ to the system \eqref{eqsystem} to obtain $x(k{+}1)=f\big(x(k),u(k)\big)$.
  \State $k \leftarrow k + 1$
and go to \textbf{Step One}. 
 
\end{algorithmic}
\end{algorithm}

\begin{remark}
Theorem~\ref{thm:timevarying} provides the theoretical foundation for resolving the classical exploration–exploitation dilemma within a unified predictive control framework. Property~(a) guarantees that the system remains safe even when stability is intentionally relaxed (e.g., by choosing $a(k) \ge 1$), thereby enabling the controller to explore a broader range of control actions or state trajectories without violating safety constraints—this corresponds to the \textit{safe exploration mode}. 
In contrast, Property~(b) ensures that, after any finite exploration phase, the system trajectories asymptotically converge once the controller switches to the \textit{exploitation mode} (by persistently enforcing $a(k) \le 1 - \bar{a}$, with $\bar{a}\in(0,1]$), where stability dominates.

Different from the parameter $a(k)$, which directly determines the stability 
behaviour of the closed-loop system, the horizon parameter $M(k)$ does not 
explicitly influence stability. Nevertheless, $M(k)$ plays a complementary and 
practically important role: increasing $M(k)$ enlarges the 
S\textsuperscript{2}-set, thereby reducing conservatism and allowing the controller 
to preserve more of the external command. This, 
however, comes at the price of higher computational complexity. Conversely, smaller 
values of $M(k)$ shrink the S\textsuperscript{2}-set, making the filter more 
restrictive but computationally cheaper. Adjusting both $a(k)$ and $M(k)$ thus provides a principled 
mechanism for shaping how the system transitions from exploration to exploitation. This capability to safely transition between exploration and exploitation 
phases within a single, theoretically grounded framework represents a central 
contribution of the proposed PS\textsuperscript{2}F approach. This feature will be 
further illustrated in the robotic navigation task in the following section. 
\end{remark}

\begin{remark}
   In situations where the terminal set $\mathbb{X}_f$ does not have an interior, for example when $\mathbb{X}_f=\{0_{n\times 1}\}$, the upper bound in
Proposition~\ref{lem:VN_upper} cannot be established, and thus the asymptotic 
stability in the Lyapunov sense cannot be concluded directly~\cite[Chap.~2.4]{rawlings2020model}. 
Nevertheless, convergence of the state to the origin can still be guaranteed. Since $V_N^*(x(k))$ is nonincreasing along closed-loop trajectories 
(see~\eqref{vdecrease} or~\eqref{vdecrease2}) and satisfies 
$V_N^*(x(k)) \ge 0$ for all $k$, the Monotone Convergence Theorem implies that 
$V_N^*(x(k))$ converges to a finite limit as $k\to\infty$. 
Moreover, the decrease condition forces this limit to be zero, which in turn 
implies that $x(k)$ converges to the origin.
\end{remark}

\section{Simulation}
 
In this section, we validate the proposed PS\textsuperscript{2}F framework through three simulation studies.  Each case study highlights a different aspect of the method: the evolution of the S\textsuperscript{2}-set over time,  the influence of design parameters on conservatism, and the integration of PS\textsuperscript{2}F with a 
goal-oriented external controller in robotic navigation. These simulations demonstrate how the proposed filter consistently enforces safety and stability 
while allowing flexible, high-level control behaviours.

\subsection{Case Study 1: Evolution Over Time}

Consider the linear system $x^+=Ax+Bu$, with
\[
A = 
\begin{bmatrix}
1 & 1\\
0 & 1
\end{bmatrix}, \qquad
B = 
\begin{bmatrix}
1 & 0\\
0 & 1
\end{bmatrix},
\]
subject to the state and input constraints
\[
\mathbb{X} = [-2,2]^2, \qquad 
\mathbb{U} = [-1,1]^2.
\]
The initial state is chosen as $x(0) = [2,-2]^\top$.

The MPC stage cost is defined as $\ell(x,u) = x^\top Q x + u^\top R u$ with 
$Q = 10 I_2$, $R = I_2$, and prediction horizon $N = 5$. 
The terminal cost and terminal set are chosen as
\[
    V_f(x) = x^\top P x,
    \qquad
    \mathbb{X}_f = \{\, x \in \mathbb{R}^2 \mid x^\top P x \le \gamma \,\},
\]
where
\[
    P =
    \begin{bmatrix}
        10.92 & 0.92 \\
        0.92  & 11.85
    \end{bmatrix}
\]
is the positive definite solution of the discrete-time algebraic Riccati equation 
associated with $(A,B,Q,R)$, and $\gamma = 7.28$ is selected so that 
$\mathbb{X}_f$ is the largest ellipsoid contained within the state and input 
constraints and invariant under the terminal controller $u = -K_{\mathrm{LQR}} x$. 
It is straightforward to verify that this MPC setup satisfies 
Assumptions~\ref{assum:continuity}, \ref{assum:constraint_sets}, and 
\ref{assum:cost_bounds}. 
The PS\textsuperscript{2}F parameters are chosen as $M = 2$ and $\alpha = 0.95$.

To make the scenario more challenging, we choose the external command as
\[
u_{\mathrm{ext}}(k) = 
\begin{bmatrix}
u_{1,\mathrm{ext}}(k)\\
u_{2,\mathrm{ext}}(k)
\end{bmatrix}
=
\begin{bmatrix}
-1.2\cos(0.2k + 0.2)\\
0.1\,x_2(k)
\end{bmatrix}.
\]
The first component $u_{1,\mathrm{ext}}(k)$ is an open-loop signal that may violate the input constraints, while the second component $u_{2,\mathrm{ext}}(k)$ corresponds to an unstable feedback law. Indeed, applying $u_{2,\mathrm{ext}}(k)$ directly to the system leads to
$x_2(k+1) = 1.1\,x_2(k)$, 
which diverges over time and is therefore intrinsically unstable.

The results are shown in Figs.~\ref{fig:ps2f_3d} and~\ref{fig:ps2f_evolution_boxed}.
Fig.~\ref{fig:ps2f_3d} provides a macroscopic, time-evolving visualisation of the 
PS\textsuperscript{2}F mechanism. It illustrates how the S\textsuperscript{2}-set 
$\mathbb{S}(x(k))$ evolves as the system state changes, forming a grey surface in the 
3D space spanned by $(u_1, u_2, k)$. Superimposed on this surface are the trajectories 
of the external command $u_{\mathrm{ext}}(k)$ and the filtered control $u(k)$, which 
clearly show how unsafe or unstable external actions are projected onto the safe–stable 
region. Fig.~\ref{fig:ps2f_evolution_boxed} complements this with a microscopic, 
step-by-step perspective. Each panel shows a 2D cross-section of the 
S\textsuperscript{2}-set at a given time $k$, together with the external command, the 
filtered control, and the direction of correction applied by the filter. These 
snapshots also align with Proposition~\ref{linearsys}: as $x(k)$ approaches the origin, the corresponding S\textsuperscript{2}-set $\mathbb{S}(x(k))$ progressively shrinks, eventually collapsing to the singleton $\{0_{m\times 1}\}$. Overall, 
these visualisations demonstrate how PS\textsuperscript{2}F systematically modifies the 
external command at each time step and how the geometry of $\mathbb{S}(x(k))$ governs 
the filter’s corrective behaviour.

\begin{figure}[t]
    \centering
    \includegraphics[width=0.6\linewidth]{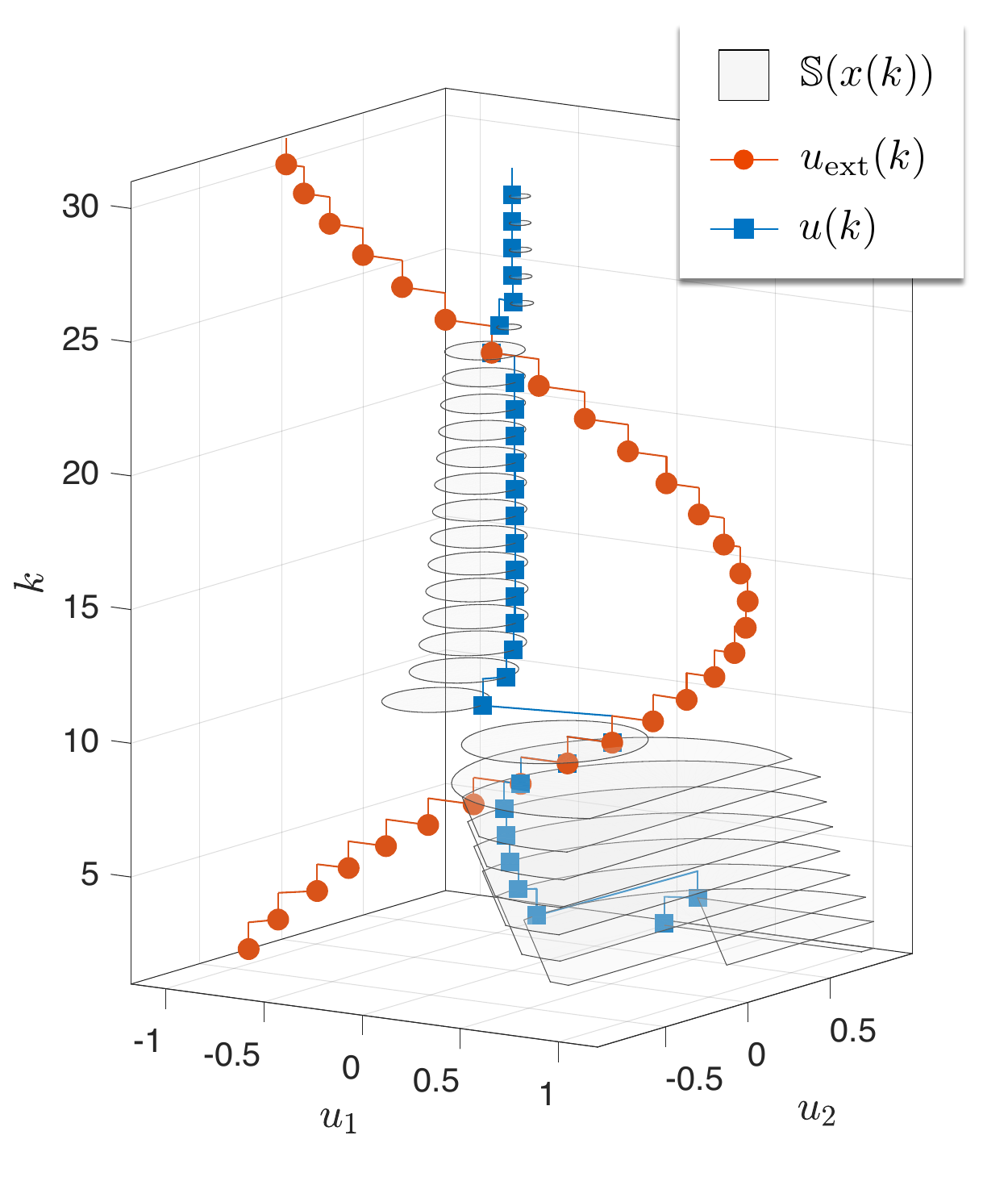}
\caption{Three-dimensional illustration of the PS\textsuperscript{2}F operation over time. 
The S\textsuperscript{2}-set $\mathbb{S}(x(k))$ (grey surface) evolves with the system state, 
while the external command  $u_{\mathrm{ext}}(k)$ (red circle) and the filtered control $u(k)$ (blue square) 
are visualised along the time axis.}

    \label{fig:ps2f_3d}
\end{figure}

\begin{figure*}[t]
  \centering
  \setlength{\fboxsep}{2pt}   
  \setlength{\fboxrule}{0.5pt}  
  \fbox{%
  \begin{minipage}{1\textwidth}
    \centering
    \setlength{\tabcolsep}{-4pt} 

    \begin{tabular}{ccccc}
      \includegraphics[width=0.21\linewidth]{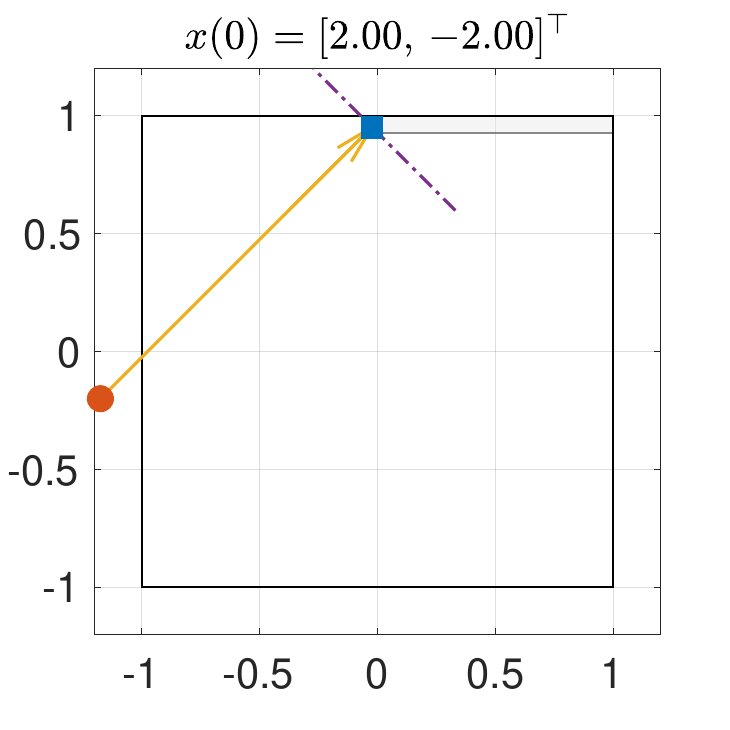} &
      \includegraphics[width=0.21\linewidth]{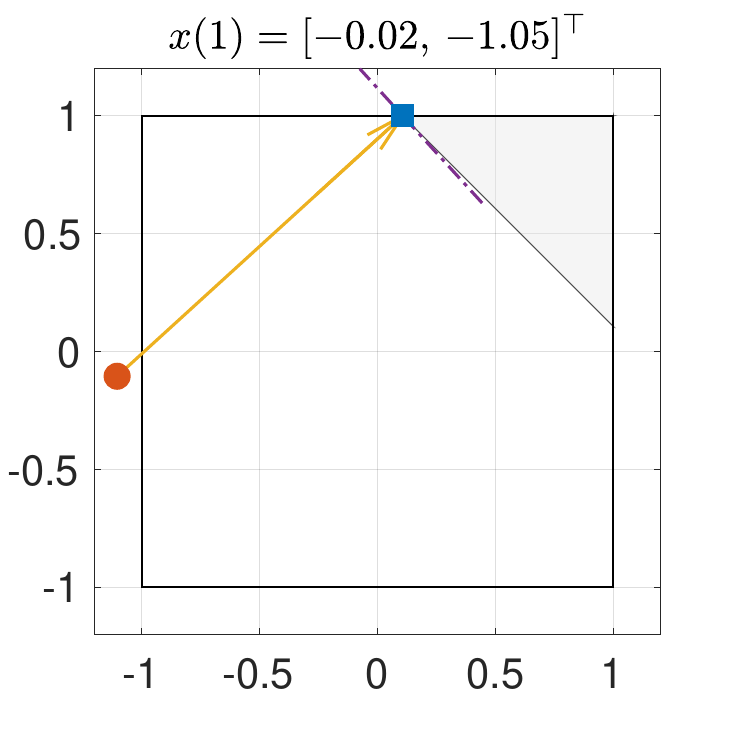} &
      \includegraphics[width=0.21\linewidth]{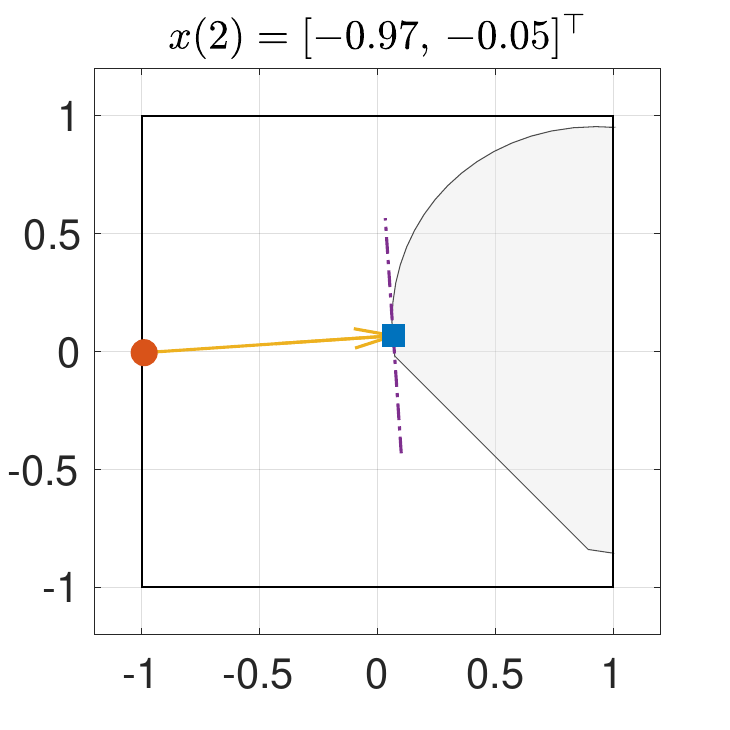} &
      \includegraphics[width=0.21\linewidth]{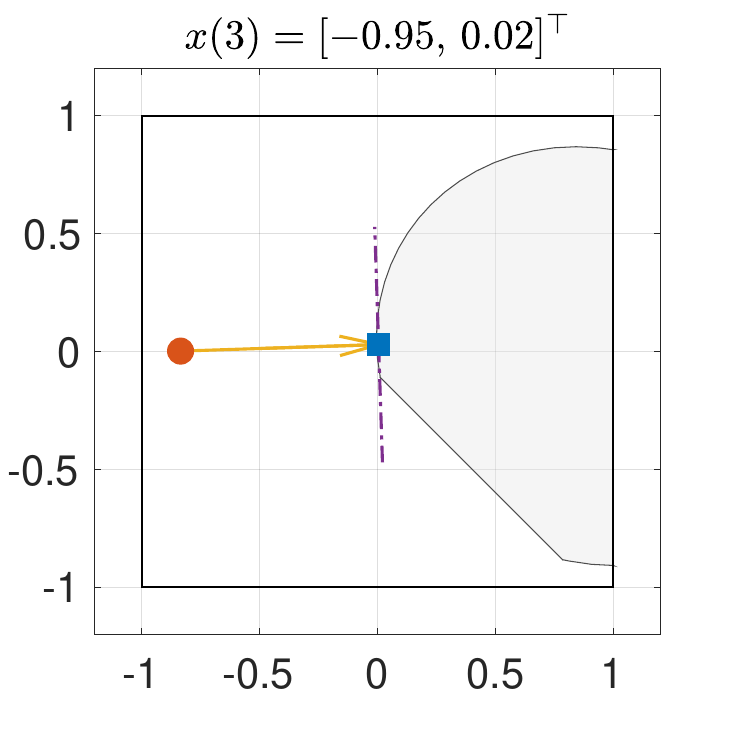} &
      \includegraphics[width=0.21\linewidth]{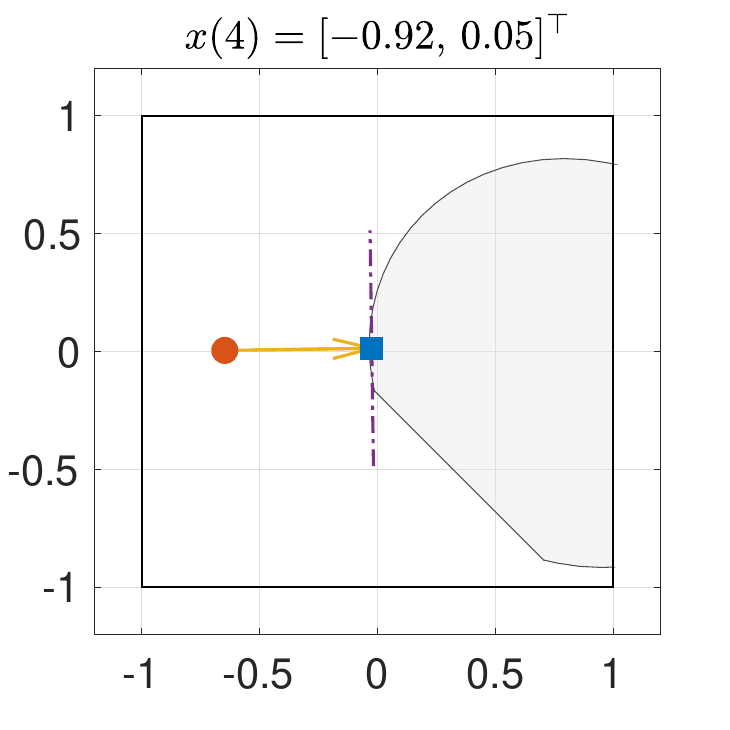} \\[-4pt]
      (a) $k=0$ & (b) $k=1$ & (c) $k=2$ & (d) $k=3$ & (e) $k=4$ \\
      \includegraphics[width=0.21\linewidth]{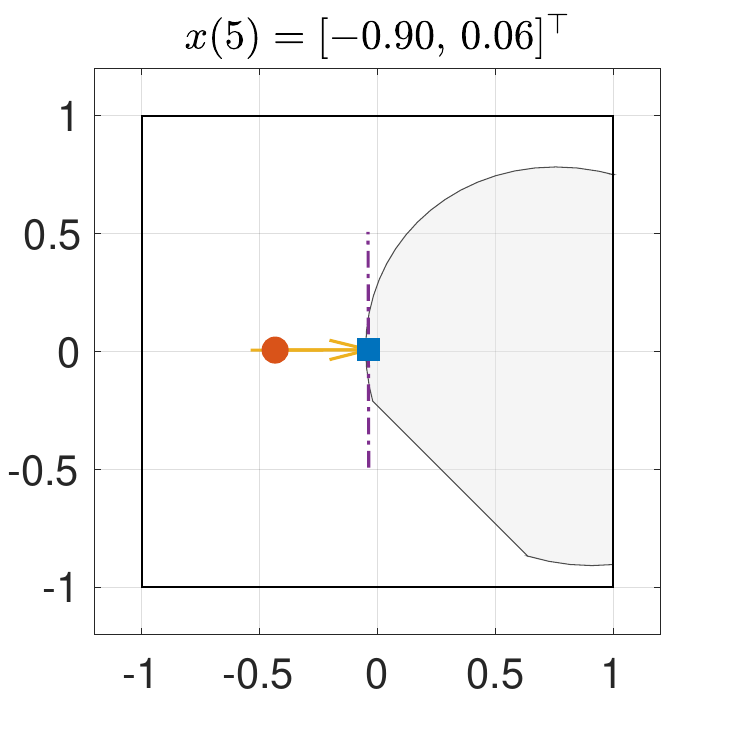} &
      \includegraphics[width=0.21\linewidth]{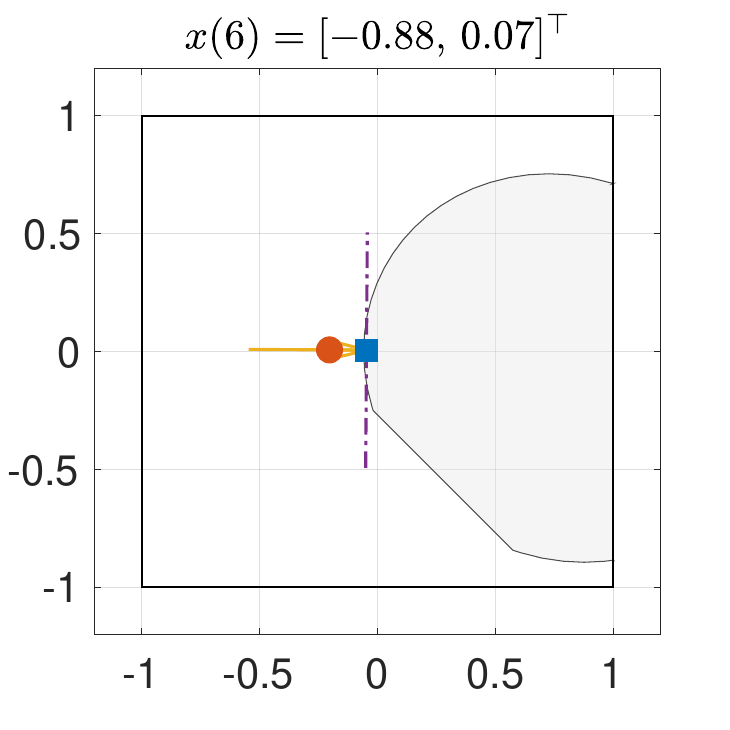} &
      \includegraphics[width=0.21\linewidth]{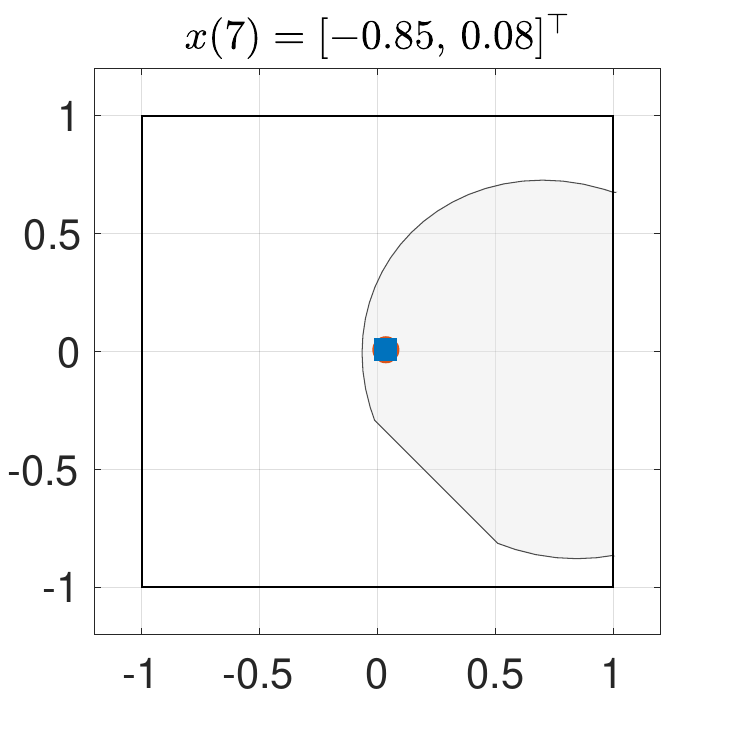} &
      \includegraphics[width=0.21\linewidth]{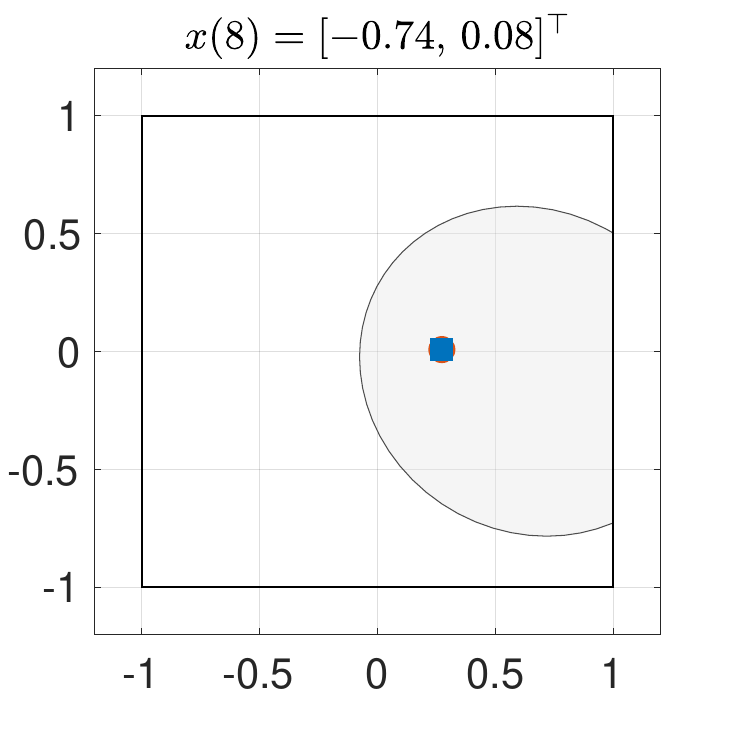} &
      \includegraphics[width=0.21\linewidth]{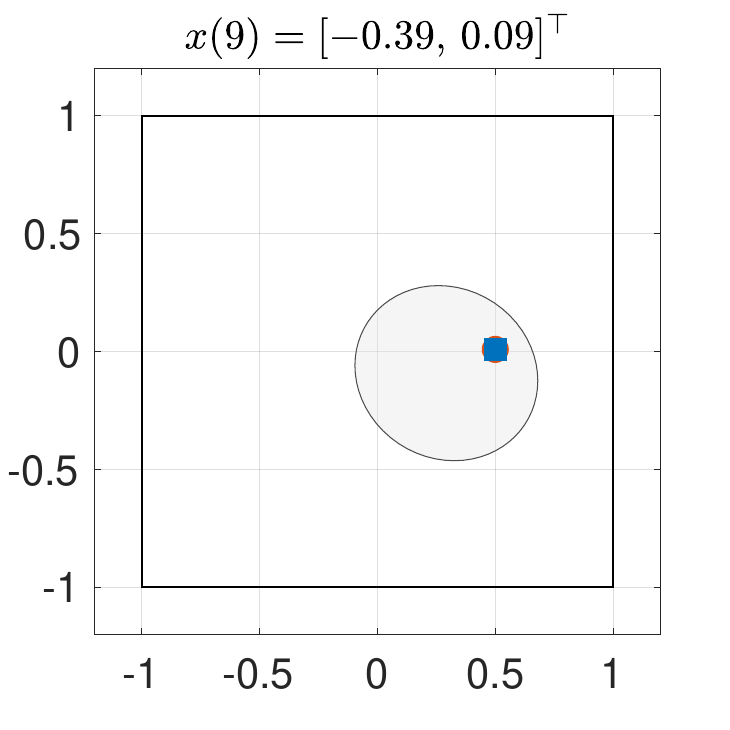} \\[-4pt]
      (f) $k=5$ & (g) $k=6$ & (h) $k=7$ & (i) $k=8$ & (j) $k=9$ \\
      \includegraphics[width=0.21\linewidth]{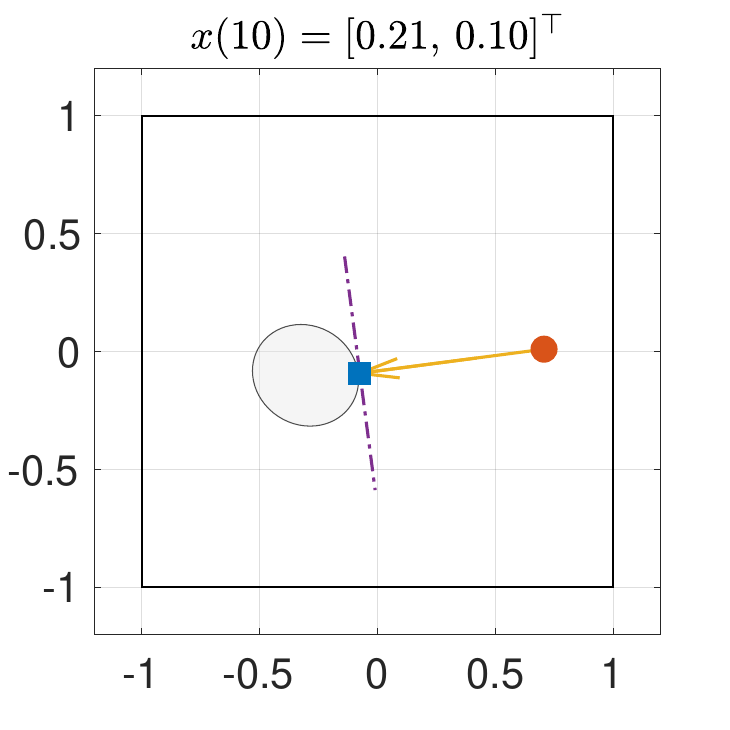} &
      \includegraphics[width=0.21\linewidth]{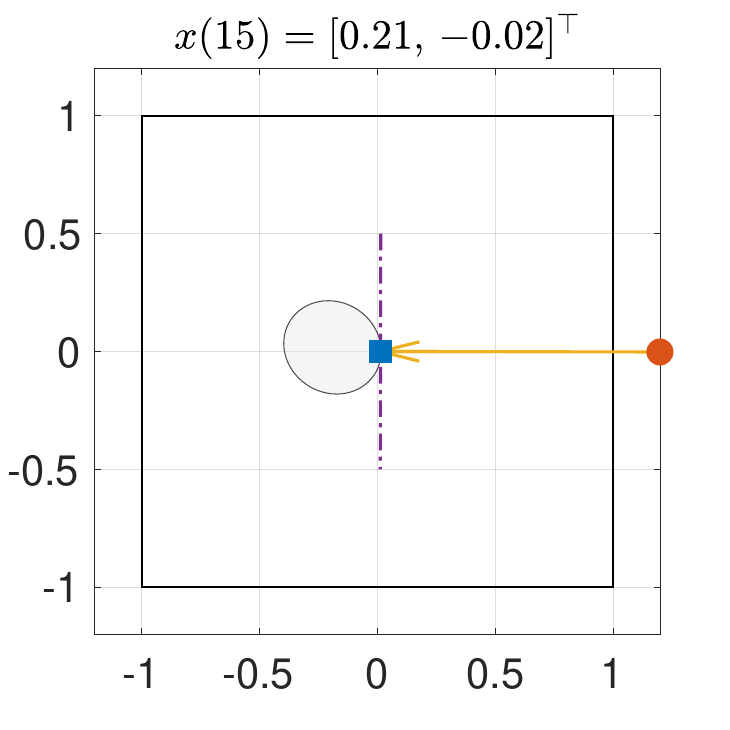} &
      \includegraphics[width=0.21\linewidth]{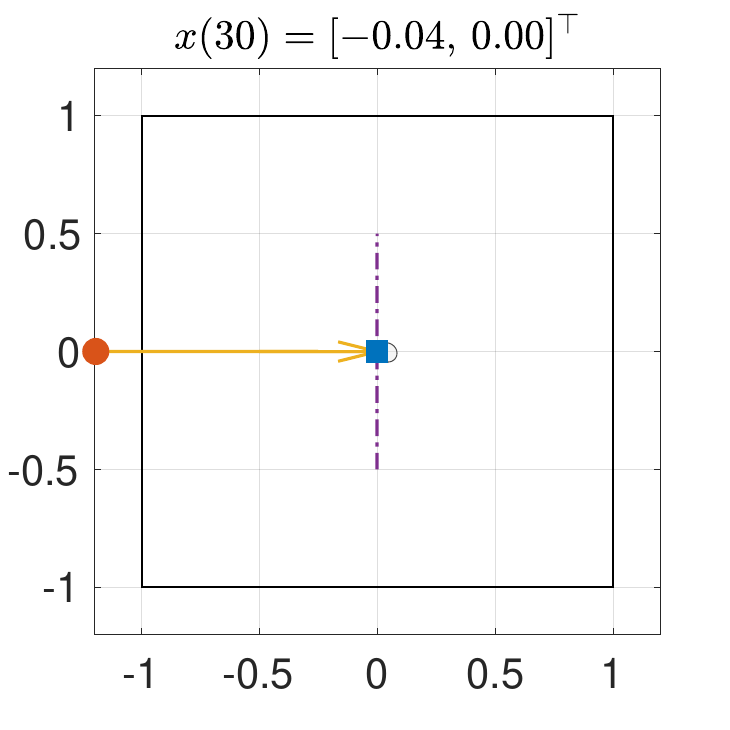} &
      \includegraphics[width=0.21\linewidth]{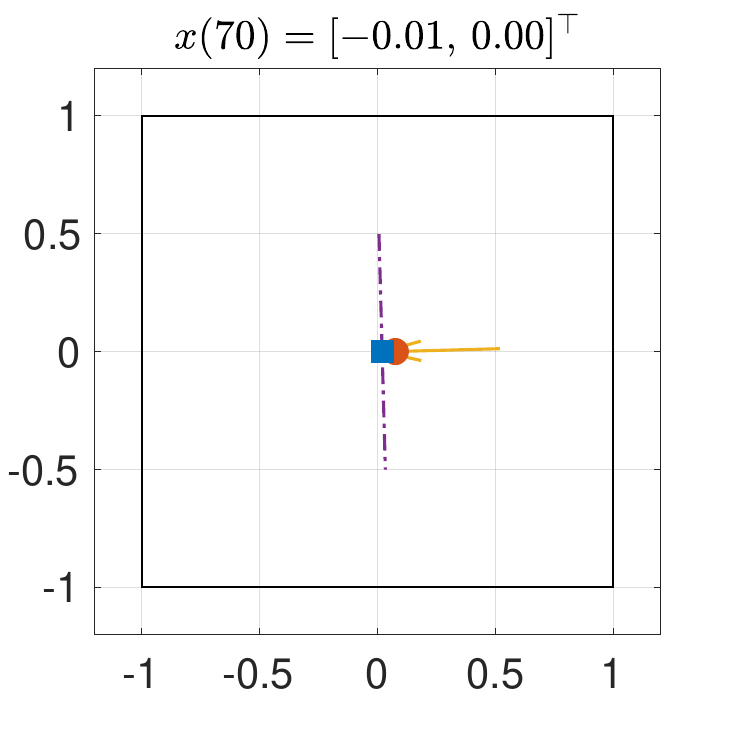} &
      \includegraphics[width=0.21\linewidth]{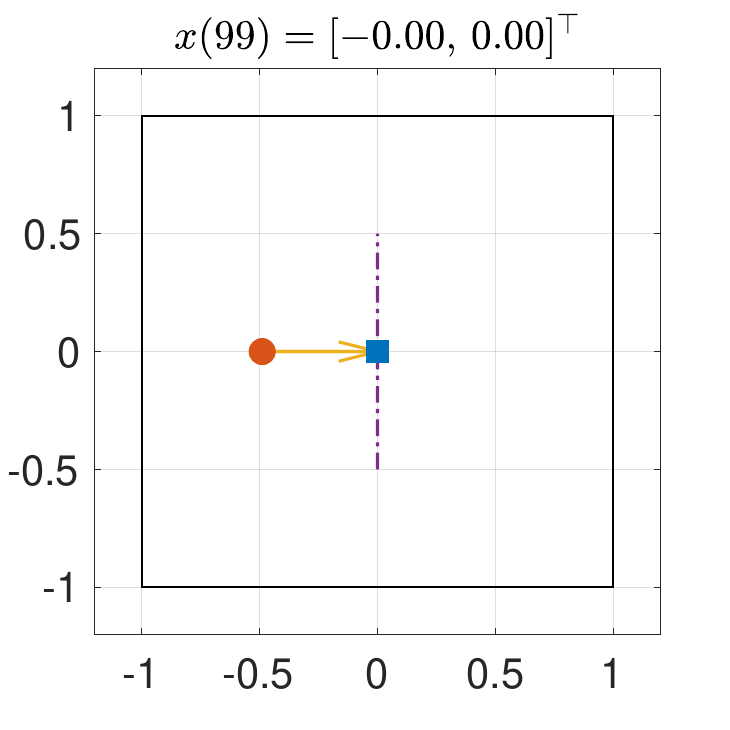} \\[-4pt]
      (k) $k=10$ & (l) $k=15$ & (m) $k=30$ & (n) $k=70$ & (o) $k=99$
    \end{tabular}

\caption{Evolution of the PS\textsuperscript{2}F filtering process along the $u_1$--$u_2$ axes.
Each panel shows the S\textsuperscript{2}-set $\mathbb{S}(x(k))$ (grey surface), 
the external command $u_{\mathrm{ext}}(k)$ (red circle), 
and the filtered control $u(k)$ (blue square). 
The orange arrow depicts the corrective action generated by PS\textsuperscript{2}F, 
which corresponds to the perpendicular projection of $u_{\mathrm{ext}}(k)$ onto the admissible set. 
This projection direction is orthogonal to the local tangent of the boundary 
$\partial\mathbb{S}(x(k))$ (purple dash-dotted line), ensuring that the filtered input 
remains within the safety--stability domain while staying as close as possible to the 
external command.}

    \label{fig:ps2f_evolution_boxed}
  \end{minipage}
  } 
\end{figure*}

\subsection{Case Study 2: Evolution Over Parameters}

To investigate how the design parameters influence the geometry of the S\textsuperscript{2}-set and thus the level of conservatism of the PS\textsuperscript{2}F, we systematically vary the key parameters and plot the resulting S\textsuperscript{2}-sets at the same state. For consistency, this parametric study is conducted on the same linear system used in Case Study~1, and we only focus on the case where the state is in its initial condition $x = x(0) = [2,-2]^\top$.

We begin with the parameters of the PS\textsuperscript{2}F. Fig.~\ref{fig:PS2F_a} illustrates the influence of the parameter $a$ with $M=5$ fixed. In line with Propositions~\ref{a0lemma} and~\ref{alemma}, increasing $a$ yields a  larger S\textsuperscript{2}-set. Fig.~\ref{fig:PS2F_M} shows the effect of the prediction horizon $M$ with $a=0.95$ fixed.
As predicted by Propositions~\ref{M1lemma} and~\ref{mlemma}, the S\textsuperscript{2}-set expands monotonically as $M$ increases. A longer predictive horizon provides more degrees of freedom for constructing safe–stable trajectories, thereby reducing conservatism and enlarging the set of admissible filtered inputs. 
A concise summary of these parameter effects is provided in Table~\ref{tab:PS2F_parameters_simple}.

\begin{table}[ht]
\centering
\caption{Effect of increasing each PS\textsuperscript{2}F design parameter.}
\vspace{1mm}

\renewcommand{\arraystretch}{1.2}

\begin{tabular}{m{1cm}|m{2.4cm}|m{2.8cm}}
\hline
\centering \textbf{Param.} 
& \centering \textbf{Effect on $\mathbb{S}(x)$} 
& \centering \textbf{Side Effect} \tabularnewline
\hline

\centering $a$ 
& \centering Enlarges $\mathbb{S}(x)$  
& \centering Typically slower convergence \tabularnewline
\hline

\centering $M$ 
& \centering Enlarges $\mathbb{S}(x)$  
& \centering Higher computational cost \tabularnewline
\hline

\end{tabular}
\label{tab:PS2F_parameters_simple}
\end{table}

We next investigate how the design parameters of the nominal MPC, which forms the 
foundation of PS\textsuperscript{2}F, affect the geometry of the resulting 
S\textsuperscript{2}-set. 
Figure~\ref{fig:PS2F_QR} illustrates the influence of the MPC state-weighting matrix, 
where we vary $Q = \rho I_2$ while fixing $R = I_2$ and $M = N = 5$. 
These results show how shaping the nominal cost landscape directly alters the boundary 
of the safe--stable domain. 
Figure~\ref{fig:PS2F_N} then examines the effect of the MPC prediction horizon $N$, 
with $M = N$ fixed. 
As $N$ increases, the S\textsuperscript{2}-set expands because the region of attraction 
of the nominal MPC becomes larger, enabling the filter to preserve a wider range of 
external commands while still guaranteeing safety and stability. 
Notably, when $N = 1$, the nominal MPC problem becomes infeasible, highlighting the 
critical role of the prediction horizon in ensuring feasibility of the overall 
PS\textsuperscript{2}F framework.

\begin{figure}
    \centering
    \includegraphics[width=0.5\linewidth]{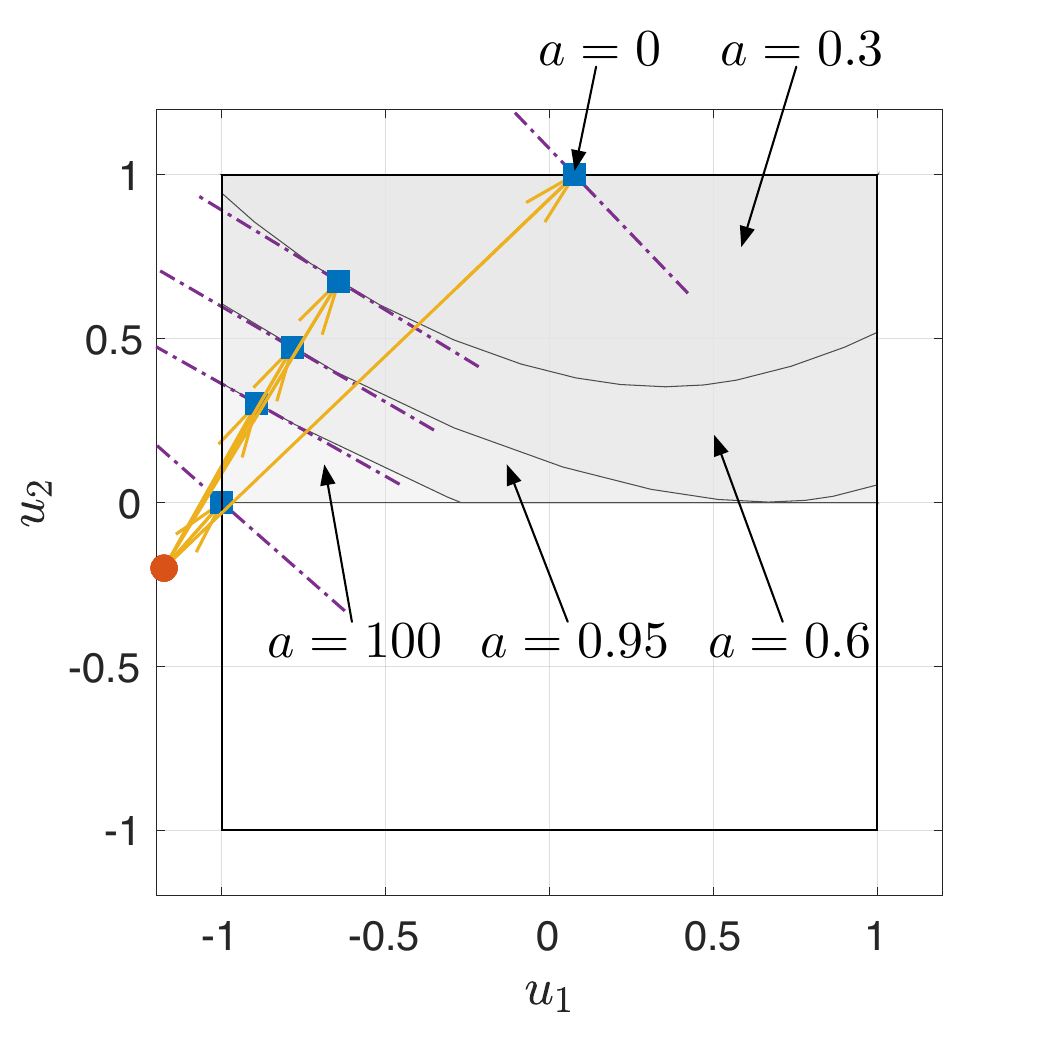}
    \caption{Effect of parameter $a$ on $\mathbb{S}(x)$, with $N=M=5$, $Q = 10I_2$, and  $R = I_2$.}
    \label{fig:PS2F_a}
\end{figure}

\begin{figure}
    \centering
    \includegraphics[width=0.5\linewidth]{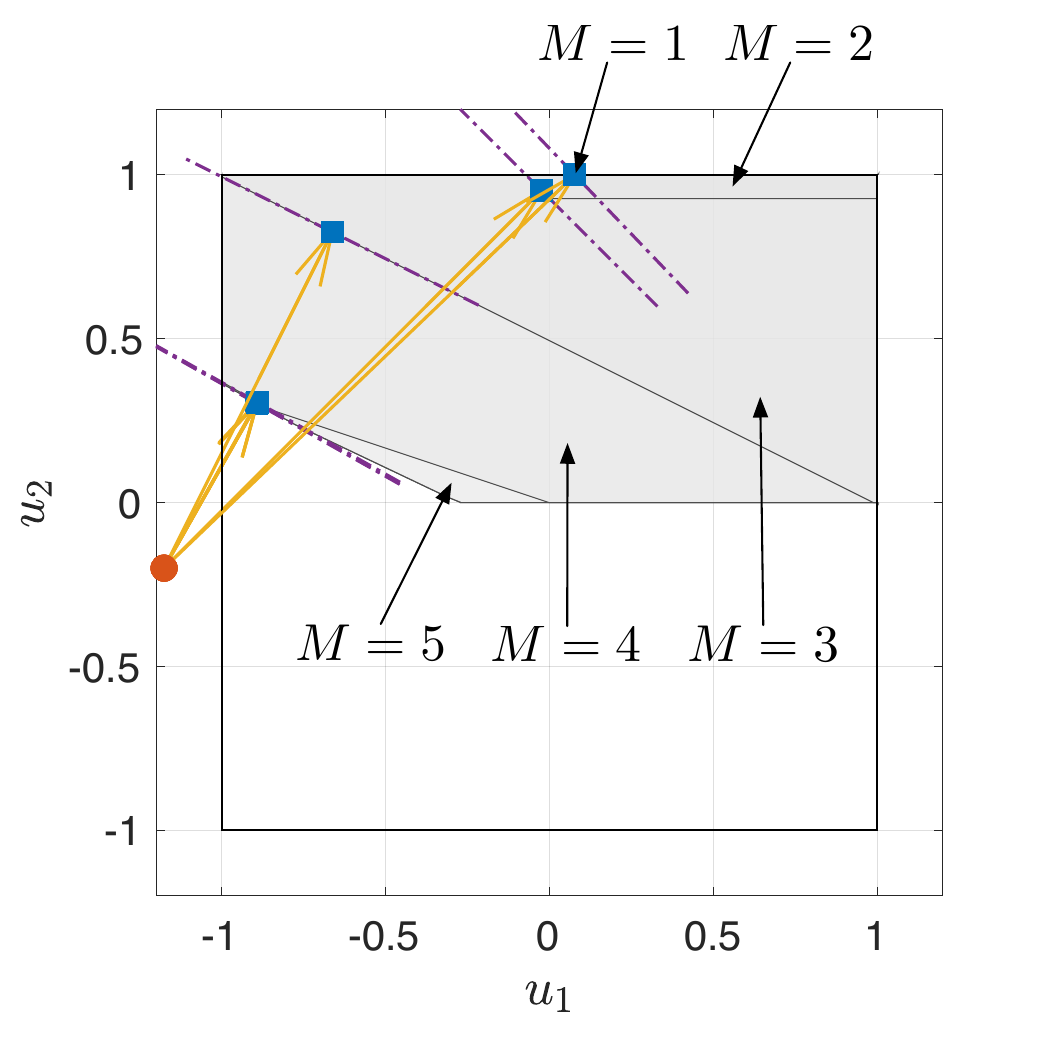}
    \caption{Effect of  prediction horizon  $M$ on $\mathbb{S}(x)$, with $N=5$, $a=0.95$, $Q = 10I_2$, and $R = I_2$.}
    \label{fig:PS2F_M}
\end{figure}

\begin{figure}
    \centering
    \includegraphics[width=0.5\linewidth]{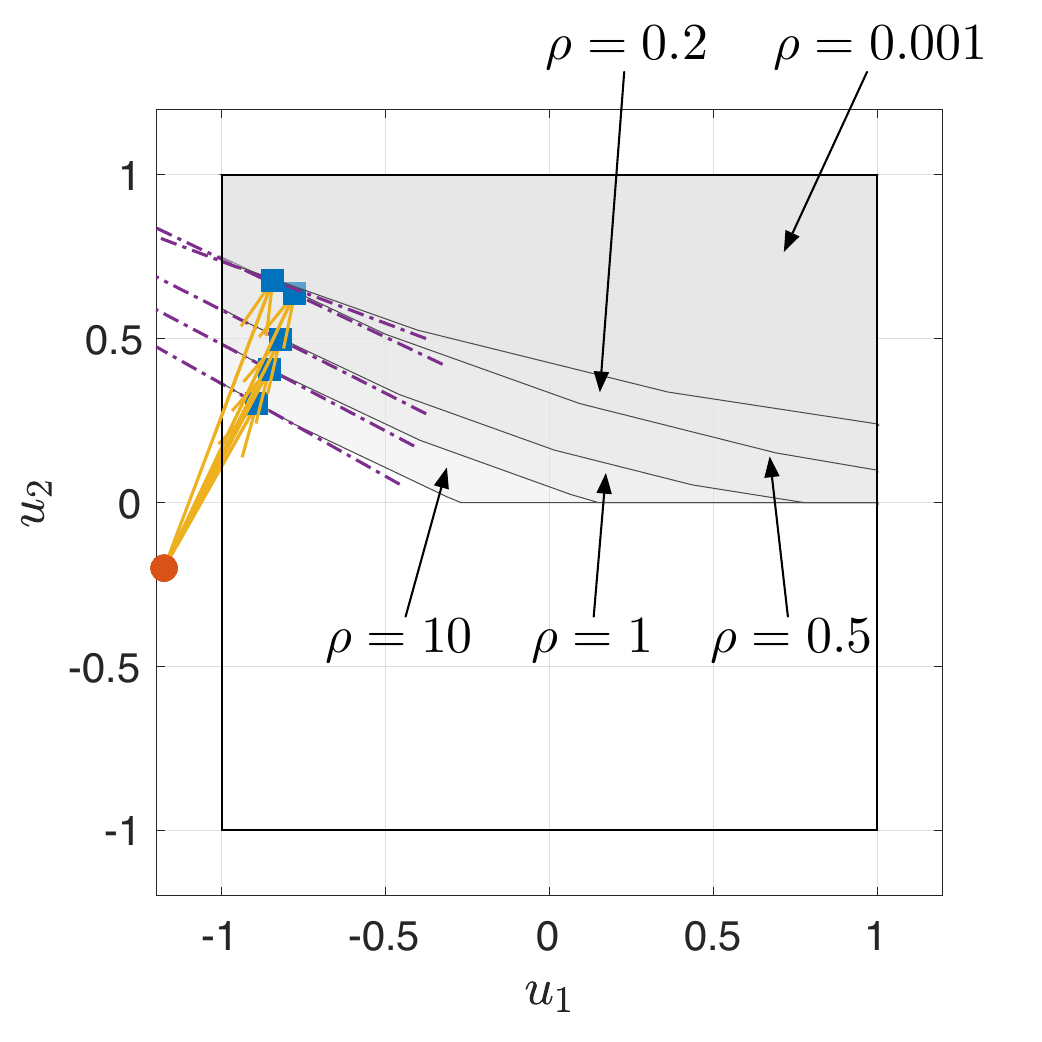}
    \caption{Effect of MPC weighting $Q=\rho I_2$ on $\mathbb{S}(x)$, with  $N=M=5$, $a=0.95$, and $R=I_2$.}
    \label{fig:PS2F_QR}
\end{figure}

\begin{figure}
    \centering
    \includegraphics[width=0.5\linewidth]{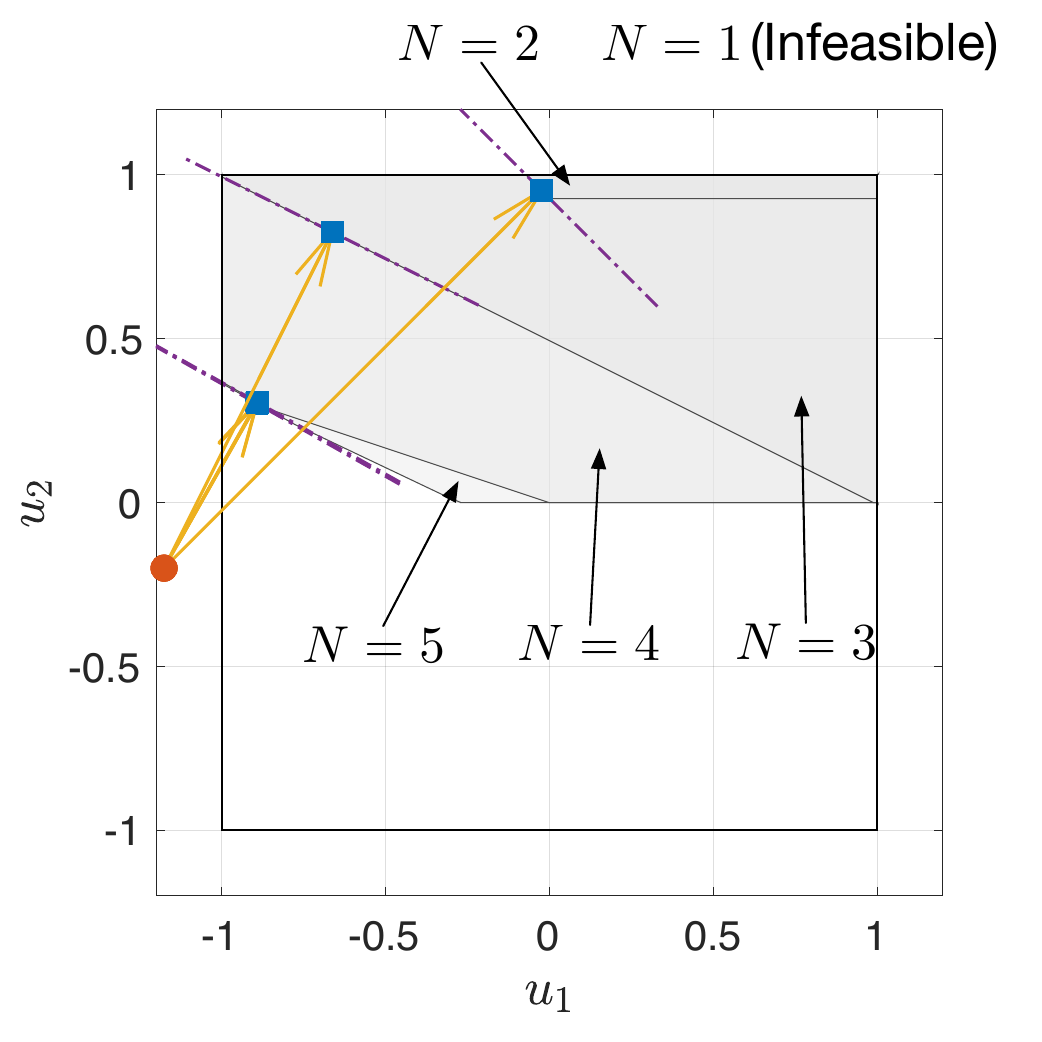}
    \caption{Effect of prediction horizon $N$ on $\mathbb{S}(x)$, with $Q = 10I_2$, $R = I_2$, $a=0.95$, and $M=N$.}
    \label{fig:PS2F_N}
\end{figure}

\subsection{Case Study 3: Go–Stay–Return Navigation}

In this final case study, we demonstrate the applicability of the 
PS\textsuperscript{2}F framework in a dynamic navigation task that 
requires both transient manoeuvring and temporary station-keeping. 
We consider a nonholonomic mobile robot with unicycle-type kinematics 
\cite{xie2021disturbance,kim2025nonholonomic}, whose dynamics are given by
\begin{equation}\label{eq:unicycle}
\begin{bmatrix}
    p_x(k+1)\\
    p_y(k+1)\\
    \theta(k+1)
\end{bmatrix}=\begin{bmatrix}
    p_x(k)\\
    p_y(k)\\
    \theta(k)
\end{bmatrix}+T_s \begin{bmatrix}
    \cos(\theta(k)) & 0\\
    \sin(\theta(k)) & 0\\
    0&1
\end{bmatrix} \begin{bmatrix}
    v(k)\\
    \omega(k)
\end{bmatrix}
\end{equation}
where $(p_x(k), p_y(k))$ denotes the robot’s planar position, $\theta(k)$ its heading angle, $v(k)$ the forward linear velocity, $\omega(k)$ the angular velocity, and $T_s = 0.2\,\mathrm{s}$  is the sampling time.
Let $x(k) = [p_x(k),\, p_y(k),\, \theta(k)]^\top$ and 
$u(k) = [v(k),\, \omega(k)]^\top$ denote the state and input vectors, 
respectively, with the initial condition $x(0) = 0_{3\times 1}$.
This model captures the fundamental nonholonomic constraint of wheeled 
robots: the robot moves only in the direction of its heading and must 
reorient via $\omega(k)$ to change direction. 
The unicycle model is widely used for ground robots, differential-drive 
vehicles, and autonomous mobile platforms operating in planar environments.

The task in this case study is to guide the mobile robot~\eqref{eq:unicycle}
from its initial position to a designated target region, remain within that region
for a prescribed dwell time, and subsequently return safely to its original location.
This three-phase “go–stay–return’’ behaviour is representative of many practical
robotic missions, including inspection–pause–return tasks, waypoint surveillance,
and object pickup followed by retreat \cite{nascimento2018nonholonomic}. Such tasks require not only accurate motion
execution during the outbound and return phases, but also the ability to maintain a
stable stationary posture within the target region. This combination of manoeuvring,
dwelling, and safe return makes the scenario particularly well-suited for evaluating
the mode scheduling capability of the PS\textsuperscript{2}F framework.
It is worth noting that this “go–stay–return’’ task cannot be reliably accomplished
by any single conventional controller, including a standard MPC design, because the
mission inherently involves distinct and sometimes conflicting control objectives
across phases, ranging from aggressive motion to long-duration stabilisation. To make the task even more challenging, we impose state and input constraints:
\[
\mathbb{X} = [-0.5,\,0.5]^2 \times [-\pi/3,\,\pi/3], 
\qquad 
\mathbb{U} = [-10,\,10]^2.
\]
 Without loss of generality, the collision volume of the robot is ignored here. The designated target that the robot must reach and remain is the point $p_{\text{go}}=[0.5,\,0.5]^\top$, the upper-right corner of the allowable position set, making the dwelling phase particularly demanding due to the tight state constraints. 

\subsubsection{The Baseline Approach}

A common baseline solution for this type of task is to design two separate controllers, each dedicated to one phase of the operation: a controller that drives the robot toward the goal (the ``go--stay'' phase) and another that brings it back to the origin (the ``return'' phase). These controllers operate independently and are switched sequentially to complete the overall mission.  

To generate these task-oriented commands, we adopt a reinforcement-learning-like optimal control strategy that prioritises goal achievement. Let
\begin{equation}\label{disgo}
D_{\mathrm{go}}(i;x):= \bigl| [p_x(i;x),\, p_y(i;x)]^\top - p_{\mathrm{go}} \bigr|^2
\end{equation}
denote the squared distance between the predicted robot position at step $i$ and the target point $p_{\mathrm{go}}$.  
The controller for the ``go-stay'' phase is given by
\begin{equation}\label{ugo}
\begin{aligned}
\mathbf{u}^*(x(k))
&= \{u^*(0;x(k)),\ldots,u^*(H-1;x(k))\}  = \arg\min_{\mathbf{u}} \sum_{i=0}^{H-1} \gamma^{\,i}\, D_{\mathrm{go}}(i;x(k)), \\
u_{\mathrm{go}}(k) &= u^*(0;x(k)),
\end{aligned}
\end{equation}
where $H=5$ is the prediction horizon and $\gamma=0.9\in(0,1)$ is a discount factor. Similarly, the ``return'' phase controller is obtained by replacing the goal position $p_{\mathrm{go}}$ with the home position 
$p_{\mathrm{return}} = [0,\,0]^\top$ in \eqref{disgo} and \eqref{ugo}. The resulting control law drives the robot back toward the origin using the same discounted predictive criterion, and its first control action is taken as the command for the return phase, denoted by $u_{\mathrm{return}}(k)$. 

Thus, under the conventional two-controller framework, the robot first applies $u_{\mathrm{go}}(k)$ during the ``go--stay" phase and subsequently switches to $u_{\mathrm{return}}(k)$ for the ``return" phase.

\subsubsection{PS\textsuperscript{2}F with Mode Scheduling}

This dynamic task requirement can also be naturally integrated into the PS\textsuperscript{2}F framework through Algorithm~\ref{alg:ps2f_complex}. 

In this demonstration, we only employ the external command $u_{\mathrm{ext}}(k) = u_{\mathrm{go}}(k)$ and omit the return controller $u_{\mathrm{return}}(k)$.  The nominal MPC employs the quadratic stage cost $\ell(x,u) = x^\top Q x + u^\top R u$
with tuning matrices \(Q = 10I_3\) and \(R = I_2\), prediction horizon \(N = 5\), terminal cost \(V_f(x)=0\), and terminal set \(\mathbb{X}_f = \{0_{3\times 1}\}\).  For the PS\textsuperscript{2}F layer, the scheduling parameters are chosen as \(M(k) = N=5\) for all \(k \in \mathbb{I}_{\ge 0}\), together with a time-varying  coefficient, given by
\[
a(k)=
\begin{cases}
100, &  k \in \mathbb{I}_{0: K_s-1},\\[2mm]
0.5, & k \in \mathbb{I}_{\ge K_s},
\end{cases}
\]
so that the closed-loop system prioritises safety during the initial phase (\(k \in \mathbb{I}_{0: K_s-1}\)) and transitions to a stability-oriented behaviour once the system is sufficiently close to the desired operating region. This scheduled adjustment enables the PS\textsuperscript{2}F mechanism to balance short-term safety requirements with long-term stability guarantees.

\subsubsection{Results}

The state and input responses are shown in Fig.~\ref{fig:robot_results}. 
For the baseline approach, the controller is scheduled to switch to the return mode at $k = 30$ (i.e., at $t = 6\,\mathrm{s}$), which corresponds to the blue curve in the filtered case in Fig.~\ref{fig:robot_results}. 
It is worth noting that the filtered case is slower than the baseline because it must keep the heading angle within the admissible safety region. This constraint requires additional time for the mobile robot to reorient before proceeding, as illustrated in Fig.~\ref{fig:robot_results}(a). 
From Fig.~\ref{fig:robot_results}, it can also be observed that the proposed method guarantees that the mobile robot safely completes the entire ``go--stay--return'' task, regardless of the choice of $K_s$.  Fig.~\ref{fig:robot_results}(c) shows the value function of the nominal MPC. As expected, it is nondecreasing before $K_s$ because, during the ``go--stay'' phase, the robot is actively moving away from the origin, causing the value function to increase (or at least not decrease). After $K_s$, according to Theorem~\ref{thm:timevarying}, the value function becomes decreasing, reflecting the transition to the stability-oriented phase of the controller.

To provide a more intuitive illustration, Fig.~\ref{fig:robot_xy} displays the robot’s 
position trajectories, where the triangular markers indicate the instantaneous heading 
angle. In Fig.~\ref{fig:robot_xy}(a), the baseline controller violates the 
heading-angle safety constraint when attempting to initiate motion near $k = 0$ or to 
return near $k = 30$, resulting in unsafe behaviour. In contrast, 
Fig.~\ref{fig:robot_xy}(b) shows that the proposed PS\textsuperscript{2}F method 
successfully maintains the robot within the admissible region while following the same 
switching schedule, demonstrating the effectiveness of the safety filter in enforcing 
constraint satisfaction throughout the manoeuvre.

\begin{figure}[ht]
    \centering
    \includegraphics[width=0.5\linewidth]{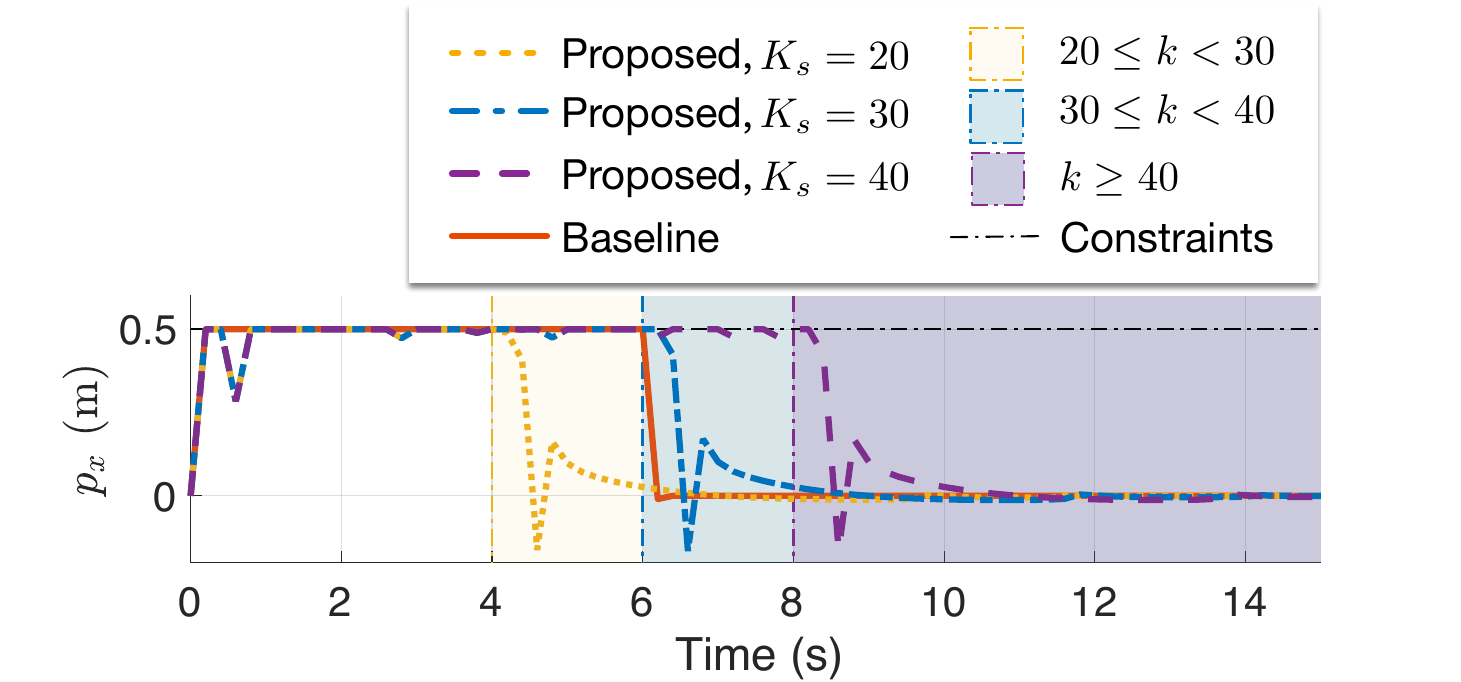}\\
     \includegraphics[width=0.5\linewidth]{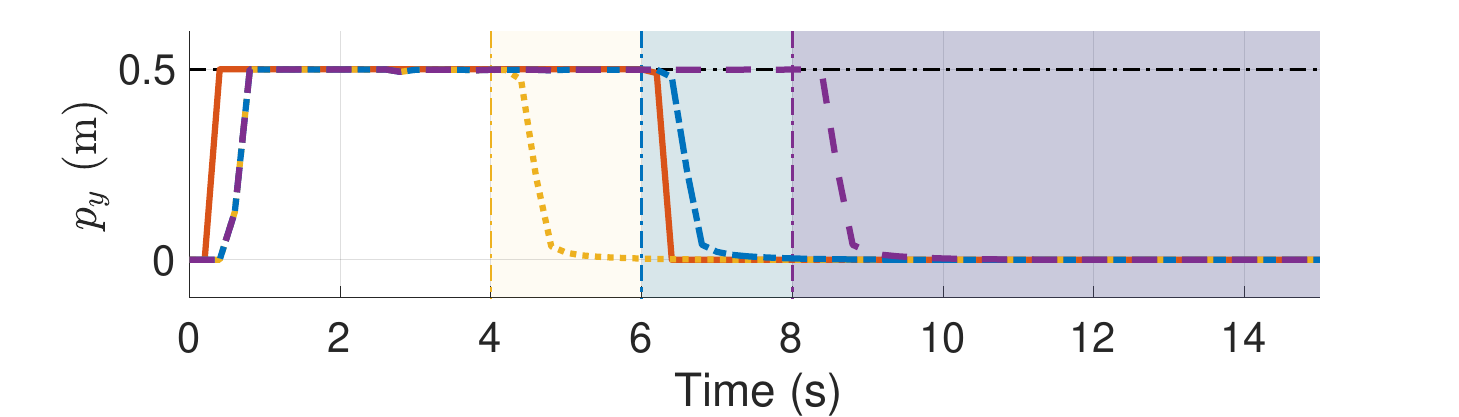}\\
     \includegraphics[width=0.5\linewidth]{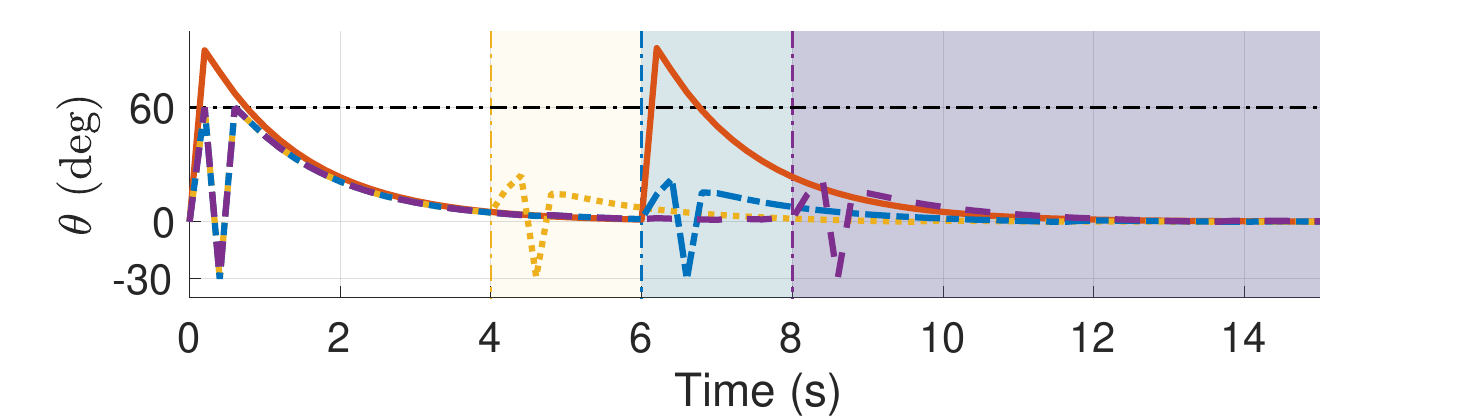}\\
     (a)\\
       \includegraphics[width=0.5\linewidth]{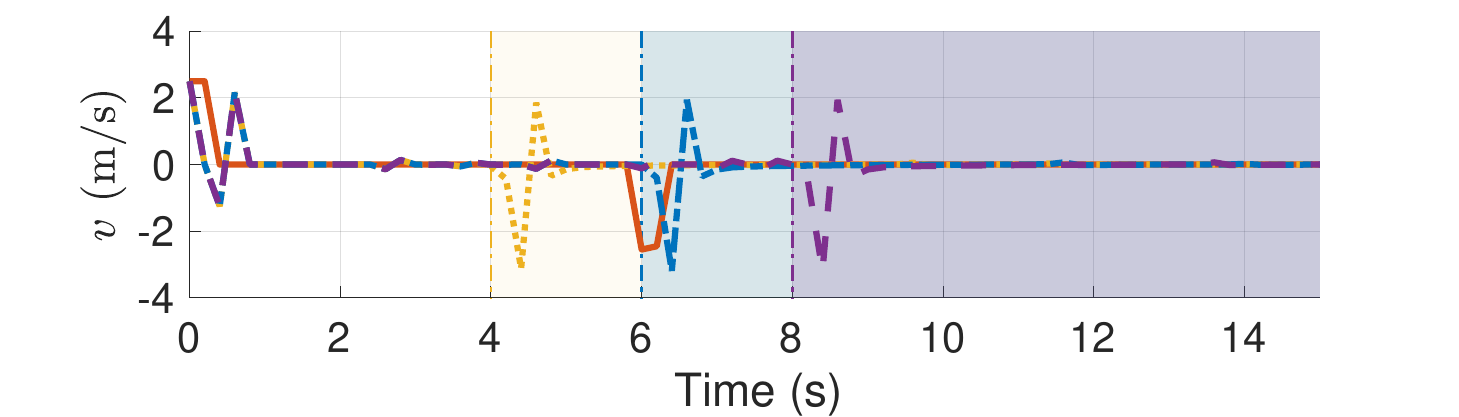}\\
     \includegraphics[width=0.5\linewidth]{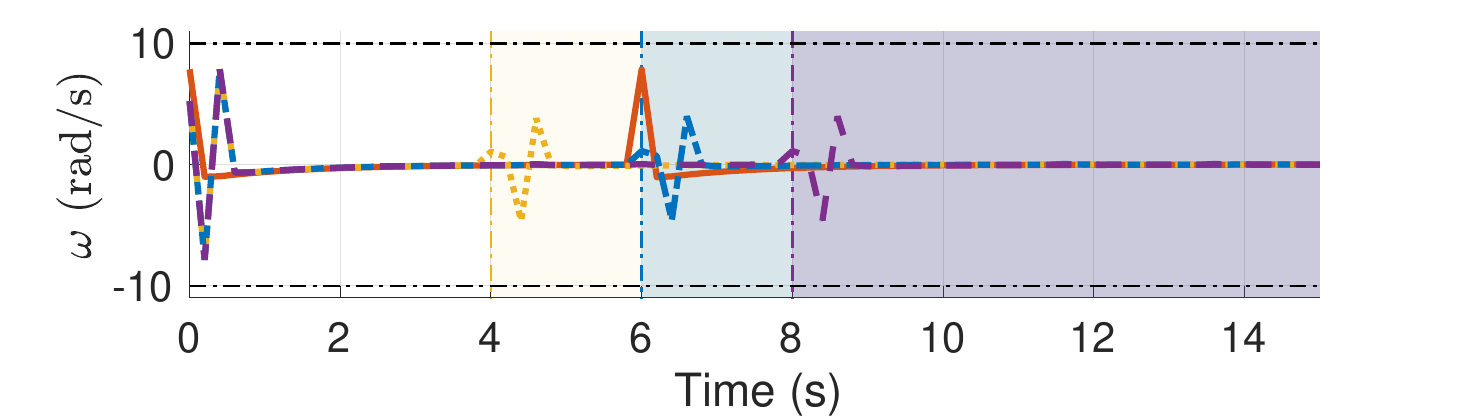}\\
     (b)\\
      \includegraphics[width=0.5\linewidth]{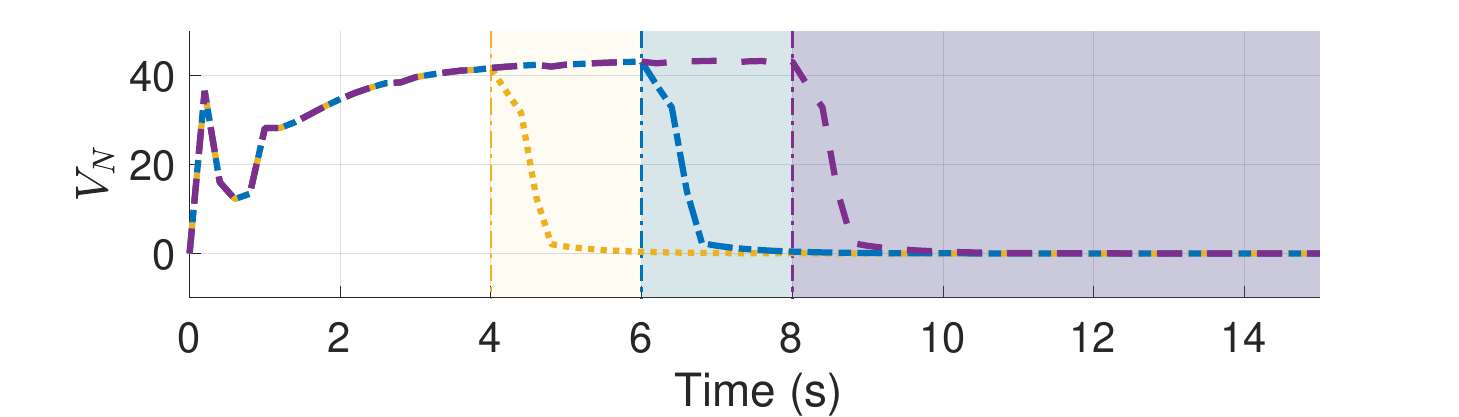}\\
     (c)
    \caption{Closed-loop evolution of the robot system under the baseline controller 
    and the proposed PS\textsuperscript{2}F method. 
    (a) States; (b) Control inputs; (c) Value function.}
    \label{fig:robot_results}
\end{figure}

\begin{figure}[ht]
    \centering
    \setlength{\tabcolsep}{-4pt} 
    \begin{tabular}{c}
\includegraphics[width=0.4\linewidth]{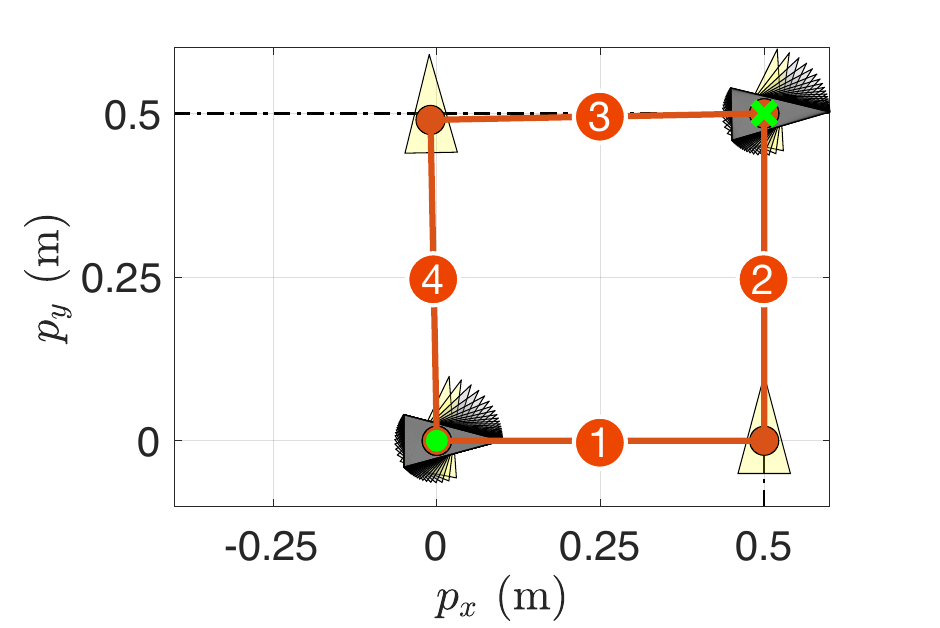}\\[-4pt]
  (a) \\ 
\includegraphics[width=0.4\linewidth]{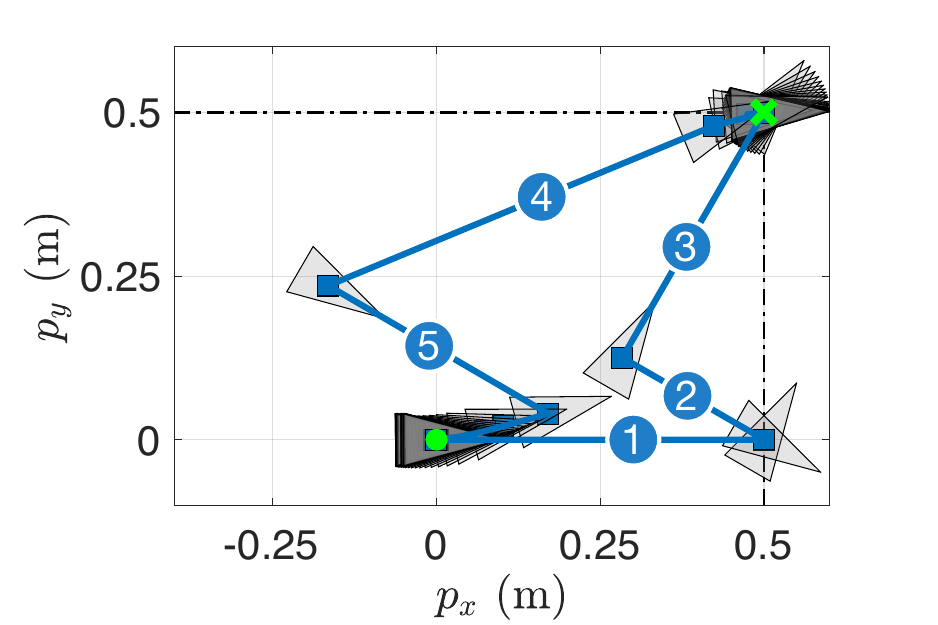} \\[-4pt]
  (b)  
    \end{tabular}

    \caption{Comparison of robot trajectories under the baseline controller
    and the proposed PS\textsuperscript{2}F method when both are scheduled to return at $k = 30$. A yellow robot marker denotes that the state violates
the safety constraints.
    (a) Baseline. (b) Proposed.} 
    \label{fig:robot_xy}
\end{figure}

\section{Conclusions}

In this paper, we have developed a unified predictive control framework for achieving safe and stable control under potentially unsafe or unstable external commands. 
Unlike most existing filter or safety-layer designs, the proposed \textit{Predictive Safety--Stability Filter} (PS\textsuperscript{2}F) provides a principled yet flexible mechanism that guarantees both safety and stability through a cascaded optimisation structure. 
By employing a nominal MPC layer as a \textit{copilot}, PS\textsuperscript{2}F inherits the theoretical guarantees of recursive feasibility and asymptotic stability while still allowing the incorporation of arbitrary goal-oriented external commands. Moreover, a time-varying parameterisation enables smooth mode scheduling, allowing the controller to prioritise safety during exploration and progressively enforce stability during exploitation. 
Numerical studies have demonstrated the effectiveness and versatility of the proposed framework across diverse operating scenarios. Future work will focus on extending the framework to uncertain, disturbed, or stochastic environments, and on implementing PS\textsuperscript{2}F on real robotic platforms to further validate its practical applicability.

\section*{Appendix} 
 \appendix

The appendix collects the detailed proofs: Appendix~\ref{proof_propos} contains the proofs of the propositions related to the second OCP, $\mathbb{P}_{f,M}(x)$, while Appendix~\ref{proof_theors} presents the proofs concerning the closed-loop behaviour of the overall system.

 \section{Proof of Propositions}\label{proof_propos}
\textbf{Proof of Proposition \ref{lem_fea}.}
Since $x \in \mathcal{X}_N$, the nominal MPC problem $\mathbb{P}_N(x)$ is feasible, and hence there exist optimal sequences $\mathbf{v}^*(x)$ and $\mathbf{z}^*(x)$. 
Consider the truncated sequences formed by taking the first $M{+}1$ elements of $\mathbf{z}^*(x)$ and the first $M$ elements of $\mathbf{v}^*(x)$, i.e.,
\begin{equation}
    \begin{aligned}
        \tilde{\mathbf{x}}(x) &:= \{ z^*(0;x), \dots, z^*(M;x) \}, \\
        \tilde{\mathbf{u}}(x) &:= \{ v^*(0;x), \dots, v^*(M-1;x) \}.
    \end{aligned}
\end{equation}
It is straightforward to verify that $\tilde{\mathbf{x}}(x)$ and $\tilde{\mathbf{u}}(x)$ satisfy \eqref{percons} and \eqref{mequ}; therefore, $\mathbb{P}_{f,M}(x)$ is feasible.  This completes the proof.
\QEDA

\textbf{Proof of Proposition \ref{a0lemma}.}
The first property is proved by contradiction.  
Assume that 
$$L(\mathbf{x}^*_{0:M-1} (x), \mathbf{u}^*(x)) < L(\mathbf{z}^*_{0:M-1}(x), \mathbf{v}^*_{0:M-1}(x)).$$  
Consider the extended sequences constructed by appending the tail of $\mathbf{z}^*(x)$ and $\mathbf{v}^*(x)$ to $\mathbf{x}^*(x)$ and $\mathbf{u}^*(x)$, respectively:
\begin{equation*} 
    \begin{aligned}
     \tilde{\mathbf{z}}(x) &:= \{x^*(0; x), \dots, x^*(M; x),  z^*(M+1; x), \dots, z^*(N; x)\}\\
     \tilde{\mathbf{v}}(x) & := \{u^*(0; x), \dots,  u^*(M-1; x),  v^*(M; x), \dots, v^*(N-1; x)\} 
     \end{aligned}
\end{equation*}
Since $z^*(M;x) = x^*(M;x)$ from~\eqref{mequ}, 
the sequence $\tilde{\mathbf{z}}(x)$ is the state trajectory induced by the input sequence $\tilde{\mathbf{v}}(x)$.  
Both are feasible for $\mathbb{P}_N(x)$, and moreover,
\begin{equation*}
\begin{aligned}
V_N(x, \tilde{\mathbf{v}}(x))
   &= \sum_{i=0}^{M-1} \ell(x^*(i; x), u^*(i; x)) + \sum_{i=M}^{N-1} \ell(z^*(i; x), v^*(i; x))
    + V_f(z^*(N; x)) \\
   &= L(\mathbf{x}^*_{0:M-1}(x), \mathbf{u}^* (x))  +    V_N(x, \mathbf{v}^*(x)) - L(\mathbf{z}^*_{0:M-1}(x), \mathbf{v}^*_{0:M-1}(x))   \\
   &< V_N(x, \mathbf{v}^*(x)) =V_N^*(x)
\end{aligned}
\end{equation*}
which contradicts the optimality of $\mathbf{v}^*(x)$ for $\mathbb{P}_N(x)$.  
Hence, $L(\mathbf{x}^*_{0:M-1}(x), \mathbf{u}^*(x)) \ge L(\mathbf{z}^*_{0:M-1}(x), \mathbf{v}^*_{0:M-1}(x))$.  
By noting~\eqref{percons}, equality follows: 
$$L(\mathbf{x}^*_{0:M-1}(x), \mathbf{u}^*(x)) = L(\mathbf{z}^*_{0:M-1}(x), \mathbf{v}^*_{0:M-1}(x)).$$

With property~(a) established, we have 
$V_N(x, \tilde{\mathbf{v}}(x))  = V_N(x, \mathbf{v}^*(x))$.  
If $\mathbb{P}_N(x)$ is strictly convex, this implies $\tilde{\mathbf{v}} (x)= \mathbf{v}^*(x)$, 
and thus $u = u^*(0;x) = v^*(0;x)$. This completes the proof. 
\QEDA

\textbf{Proof of Proposition \ref{alemma}.}
Consider $a = a_2$.  
For any element $u(0;x) \in \mathbb{S}^{a_2}(x)$, there exists a control sequence
\[
\mathbf{u}(x) = \{ u(0;x), u(1;x), \dots, u(M{-}1;x) \} \in \mathcal{U}^{a_2}_M(x)
\]
such that the corresponding state trajectory satisfies the inequality
\begin{equation*} 
\begin{aligned}
    &L(\mathbf{x}_{0:M-1}(x), \mathbf{u}(x)) 
    - L(\mathbf{z}^*_{0:M-1}(x), \mathbf{v}^*_{0:M-1}(x))   \le a_2\,\ell(x(0;x), u(0;x))\le a_1\,\ell(x(0;x), u(0;x))
\end{aligned}
\end{equation*}
Hence,  $\mathbf{u}(x) \in \mathcal{U}^{a_1}_M(x)$, which implies that 
$u(0; x) \in \mathbb{S}^{a_1}(x)$. 
Therefore, $\mathbb{S}^{a_2}(x) \subseteq \mathbb{S}^{a_1}(x)$. This completes the proof.  
\QEDA

\textbf{Proof of Proposition \ref{M1lemma}.}
For $M=1$, the terminal equality constraint \eqref{mequ} reduces to $x(1;x) = z^*(1;x)$. 
By the system dynamics, this is equivalent to
$f\bigl(x, u^*(0;x)\bigr) = f\bigl(x, v^*(0;x)\bigr)$.
Since the map $u \mapsto f(x,u)$ is injective for the given state $x$, the above equality
implies $u^*(0;x) = v^*(0;x)$. This completes the proof.
\QEDA

\textbf{Proof of Proposition \ref{mlemma}.}  Consider $M = i$. For any element $u(0; x) \in \mathbb{S}_i(x)$, there exists a control sequence 
\[
\mathbf{u}_i(x) = \{ u(0; x), u(1; x), \dots, u(i-1; x) \} \in \mathcal{U}^a_i(x)
\]
such that the following hold:
\begin{subequations}
\begin{align}
    &L(\mathbf{x}_{0:i-1}(x), \mathbf{u}_i(x)) - L(\mathbf{z}^*_{0:i-1}(x), \mathbf{v}^*_{0:i-1}(x)) - a\ell(x(0;x), u(0;x)) \le 0, \label{percons2}\\ 
    &x(i;x) = z^*(i; x). \label{mequ2}
\end{align}
\end{subequations}
Using~\eqref{mequ2}, we can extend the control sequence $\mathbf{u}_i(x)$ by appending the optimal input $v^*(i; x)$ to obtain
\[
\tilde{\mathbf{u}}_{i+1}(x): = \{ u(0; x), u(1; x), \dots, u(i-1; x), v^*(i; x) \}.
\]
The corresponding state sequence is
\[
\tilde{\mathbf{x}}(x): = \{ x(0; x),  \dots, x(i-1; x), z^*(i; x), z^*(i+1; x) \}.
\]
The constructed pair $(\tilde{\mathbf{x}}(x), \tilde{\mathbf{u}}_{i+1}(x))$ satisfies both~\eqref{percons2} and~\eqref{mequ2} with $i$ replaced by $i+1$. 
Hence, $\tilde{\mathbf{u}}_{i+1}(x) \in \mathcal{U}^a_{i+1}(x)$, which implies that 
$u(0; x) \in \mathbb{S}_{i+1}(x)$. 
Therefore, $\mathbb{S}_i(x) \subseteq \mathbb{S}_{i+1}(x)$ for all $i \in \mathbb{I}_{1:N-1}$. This completes the proof.  
\QEDA

\textbf{Proof of Proposition~\ref{linearsys}.}
Since the safety constraints in both OCPs, as well as the terminal constraint of the
first OCP, are inactive, the nominal MPC problem coincides with the infinite-horizon
LQR problem. Hence, the value function of the nominal MPC is $V_N^*(x) = x^\top P x$, and the optimal state and input trajectories are given by
\begin{equation}\label{eq:LQR_traj}
\begin{aligned}
    z^*(i;x) &= A_{\mathrm{LQR}}^{\,i} x, 
        \qquad i = 0,1,\ldots, N, \\ 
    v^*(i;x) &= -K_{\mathrm{LQR}}\, z^*(i;x),
\end{aligned}
\end{equation}  
where $K_{\mathrm{LQR}}:= (R + B^\top P B)^{-1} B^\top P A$ qnd $A_{\mathrm{LQR}} := A - BK_{\mathrm{LQR}}$. With \eqref{eq:LQR_traj}, the cumulative stage cost of the nominal optimal trajectory over the first $M$ steps is
\begin{equation}\label{eq:L_nominal_segment_closed_form}
\begin{aligned}
   & L\bigl(\mathbf{z}^*_{0:M-1}(x), \mathbf{v}^*_{0:M-1}(x)\bigr)\\
    &= \sum_{i=0}^{M-1} \ell\left(z^*(i;x),v^*(i;x)\right)+V_f(z^*(M;x)) -V_f(z^*(M;x))\\
    &=z^*(0;x)^\top P z^*(0;x)- z^*(M;x)^\top P z^*(M;x)\\
    &= x^\top \left(P - A_{\mathrm{LQR}}^{M\top} P A_{\mathrm{LQR}}^{M}\right) x.
\end{aligned}
\end{equation}
Next, we derive the lifted form of the dynamics and rewrite the conditions 
\eqref{percons} and \eqref{mequ} accordingly.
Over a prediction horizon of length $M$, define the stacked state and input vectors
\[
\begin{aligned}
   \mathbf{x}:=
    \begin{bmatrix}
        x(0;x)  \\ \vdots \\ x(M;x)
    \end{bmatrix}
    \in \mathbb{R}^{(M+1)n},~
     \mathbf{u}
    :=
    \begin{bmatrix}
        u(0;x)  \\ \vdots \\ u(M-1;x)
    \end{bmatrix}
    \in \mathbb{R}^{Mm}.
\end{aligned}
\]
For notational simplicity, we use the same symbols $\mathbf{x}$ and $\mathbf{u}$ as in the
sequence representation of the second OCP, $\mathbb{P}_{f,M}(x)$; in this context, however, they denote the stacked state and
input vectors.
The system dynamics $x(i+1;x) = A x(i;x) + B u(i;x)$ imply the lifted (affine) dynamics
\begin{equation}\label{eq:lifted_dynamics}
    \mathbf{x} = \Phi_M x + \Gamma_M \mathbf{u},
\end{equation}
with $\Phi_M \in \mathbb{R}^{(M+1)n \times n}$ and $\Gamma_M\in \mathbb{R}^{(M+1)n \times Mm}$, given by
\[
\begin{aligned}
    \Phi_M
    &:=
    \begin{bmatrix}
        I_n \\ 
        A \\ 
        A^2 \\
        \vdots \\
        A^{M}
    \end{bmatrix},~
    \Gamma_M
    :=
    \begin{bmatrix}
        0_{n\times m}      &  0_{n\times m}       & \cdots &  0_{n\times m} \\
        B      &  0_{n\times m}      & \cdots &  0_{n\times m} \\
        A B    & B      & \cdots &  0_{n\times m} \\
        \vdots & \vdots & \ddots & \vdots \\
        A^{M-1} B & A^{M-2} B & \cdots & B
    \end{bmatrix}.
\end{aligned}   
\]
In particular, the terminal state satisfies
\[
    x(M;x)= e_M \mathbf{x}
        = e_M \Phi_M x + e_M \Gamma_M \mathbf{u},
\]
where $e_M := [0_{n\times n},~\cdots,~0_{n\times n},~I_n] \in \mathbb{R}^{n\times (M+1)n}$ selects the last state component. The terminal  constraint \eqref{mequ} can be equivalently expressed as
\begin{equation}\label{eq:affine_constraint}
    A_{\mathrm{eq}} \mathbf{u} = b_{\mathrm{eq}} x
\end{equation}
where $ A_{\mathrm{eq}} := e_M \Gamma_M$ and $b_{\mathrm{eq}} := A_{\mathrm{LQR}}^M - e_M \Phi_M.$

To express the inequality \eqref{percons} in compact lifted form, we introduce the
block-diagonal matrices
\[
\begin{aligned}
 \tilde{Q} &:= \operatorname{diag}(Q,\dots,Q,0_{n\times n}) 
    \in \mathbb{R}^{(M+1)n\times (M+1)n},\\
 \tilde{R} &:= \operatorname{diag}(R,\dots,R) 
    \in \mathbb{R}^{Mm\times Mm},   
\end{aligned}
\]
so that the cumulative stage cost over the first $M$ steps can be written as
\[
    L(\mathbf{x}_{0:M-1}(x), \mathbf{u}(x))
    = \mathbf{x}^\top \tilde{Q}\,\mathbf{x}
      + \mathbf{u}^\top \tilde{R}\,\mathbf{u}.
\]
Using the lifted dynamics \eqref{eq:lifted_dynamics}, we obtain
\begin{equation}\label{eq:L_lifted}
\begin{aligned}
   & L(\mathbf{x}_{0:M-1}(x), \mathbf{u}(x))\\
    &= (\Phi_M x + \Gamma_M \mathbf{u})^\top 
       \tilde{Q}\,(\Phi_M x + \Gamma_M \mathbf{u})
       + \mathbf{u}^\top \tilde{R}\,\mathbf{u} \\[1mm]
    &= x^\top \Phi_M^\top \tilde{Q}\,\Phi_M x
       + 2 x^\top \Phi_M^\top \tilde{Q}\,\Gamma_M \mathbf{u}   + \mathbf{u}^\top 
        \bigl(\Gamma_M^\top \tilde{Q}\,\Gamma_M + \tilde{R}\bigr)\mathbf{u}.
\end{aligned}
\end{equation}
Similarly, the first-stage cost can be written as
\begin{equation}\label{eq:ell_stage0}
\begin{aligned}
   & \ell(x(0;x), u(0;x))\\
    &= x(0;x)^\top Q\, x(0;x)
       + u(0;x)^\top R\,u(0;x) \\
    &= x^\top Q\, x + \mathbf{u}^\top e_1^\top R e_1 \mathbf{u},
\end{aligned}
\end{equation}
where $e_1 :=[I_m, 0_{m\times m},\cdots, 0_{m\times m}] 
    \in \mathbb{R}^{m\times (Mm)}$
selects the first input block from $\mathbf{u}$. 
Substituting \eqref{eq:L_nominal_segment_closed_form}, \eqref{eq:L_lifted},  and \eqref{eq:ell_stage0} into \eqref{percons} yields
\begin{equation}\label{eq:affine_constraint2}
  \mathbf{u}^\top H\,\mathbf{u}
    + 2 x^\top F\,\mathbf{u}
    + x^\top G x \le 0,   
\end{equation}
where 
\[
\begin{aligned}
    H 
    &:= \Gamma_M^\top \tilde{Q}\,\Gamma_M 
        + \tilde{R} 
        - a\, e_1^\top R e_1, \\[1mm]
    F 
    &:= \Phi_M^\top \tilde{Q}\,\Gamma_M, \\[1mm]
    G 
    &:= \Phi_M^\top \tilde{Q}\,\Phi_M 
        - \bigl(P - A_{\mathrm{LQR}}^{M\top} P A_{\mathrm{LQR}}^{M}\bigr)
        - a Q.
\end{aligned}
\]
Together with \eqref{eq:affine_constraint} and \eqref{eq:affine_constraint2}, this
establishes the lifted quadratic representation of the PS\textsuperscript{2}F
constraints and completes the proof.
\QEDA

 \section{Proof of Theorems}\label{proof_theors}

\textbf{Proof of Theorem \ref{thm:Recursive feasibility and stability}.} To prove property~(a), we first establish recursive feasibility. 
At time step $k$, suppose that $x(k) \in \mathcal{X}_N$. 
Then, the nominal MPC problem $\mathbb{P}_N(x(k))$ is feasible, and the corresponding optimal sequences $\mathbf{v}^*(x(k))$ and $\mathbf{z}^*(x(k))$ exist, as defined in~\eqref{vstar} and~\eqref{zstar}. 
Since $z^*(N; x(k)) \in \mathbb{X}_f$, Assumption~\ref{assum:cost_bounds} ensures the existence of a control input $\tilde{v}(N; x(k))\in\mathbb{U}$ such that the successor state 
\begin{equation}
    \tilde{z}(N{+}1; x(k)) := f\big(z^*(N; x(k)), \tilde{v}(N; x(k))\big)
\end{equation}
satisfies $\tilde{z}(N{+}1; x(k)) \in \mathbb{X}_f$ and 
\begin{equation}\label{terminalcost}
\begin{aligned}
& V_f\big(\tilde{z}(N{+}1; x(k))\big) - V_f\big(z^*(N; x(k))\big)   \le -\,\ell\big(z^*(N; x(k)), \tilde{v}(N; x(k))\big).   
\end{aligned}
\end{equation}
By Proposition~\ref{lem_fea}, the filter problem $\mathbb{P}_{f,M}(x(k))$ is also feasible under the same condition, and its optimal sequences $\mathbf{u}^*(x(k))$ and $\mathbf{x}^*(x(k))$ exist as given in~\eqref{ustar} and~\eqref{xstar}.

At time step $k+1$, noting that $u(k) = u^*(0; x(k))$, the successor state satisfies $x(k+1) = f(x(k), u(k)) = x^*(1; x(k)).$ To establish recursive feasibility, consider the extended sequences constructed by shifting the optimal solution of $\mathbb{P}_{f,M}(x(k))$ one step forward and appending the terminal element of the nominal MPC trajectory. Specifically, define
\begin{equation}\label{feak1}
\begin{aligned}
&\tilde{\mathbf{v}}(x(k{+}1)) := 
\big\{ u^*(1; x(k)), \dots, u^*(M{-}1; x(k)),  v^*(M; x(k)), \dots, v^*(N{-}1; x(k)), \tilde{v}(N; x(k)) \big\},\\
&\tilde{\mathbf{z}}(x(k{+}1)) := 
\big\{ x^*(1; x(k)), \dots, x^*(M; x(k)),  z^*(M{+}1; x(k)), \dots, z^*(N; x(k)), \tilde{z}(N{+}1; x(k)) \big\}.
\end{aligned}
\end{equation}
Since $z^*(M; x(k)) = x^*(M; x(k))$ by~\eqref{mequ}, the sequence $\tilde{\mathbf{z}}(x(k+1))$ is precisely the state trajectory induced by the control sequence $\tilde{\mathbf{v}}(x(k+1))$. 
Therefore, the constructed pair $(\tilde{\mathbf{z}}(x(k+1)), \tilde{\mathbf{v}}(x(k{+}1)))$ is feasible for the nominal MPC problem $\mathbb{P}_N(x(k+1))$.

By induction, if $x(0) \in \mathcal{X}_N$, then $\mathbb{P}_N(x(0))$ is feasible at the initial step. 
From the construction above, feasibility of $\mathbb{P}_N(x(k))$ at time $k$ implies that both  $\mathbb{P}_{f,M}(x(k{+}1))$ and $\mathbb{P}_N(x(k{+}1))$  remain feasible. 
Hence, $\mathbb{P}_N(x(k))$ and $\mathbb{P}_{f,M}(x(k))$ are feasible for all $k \in\mathbb{I}_{\ge 0}$. The overall recursive feasibility process is illustrated in Fig.~\ref{fig:recursive feasibility}.
The feasibility of these optimisation problems directly ensures that the state and input constraints are satisfied at every time step. This completes the proof of property~(a). 

\begin{figure}[t]
\centering
    \centering
    \includegraphics[width=0.6\linewidth]{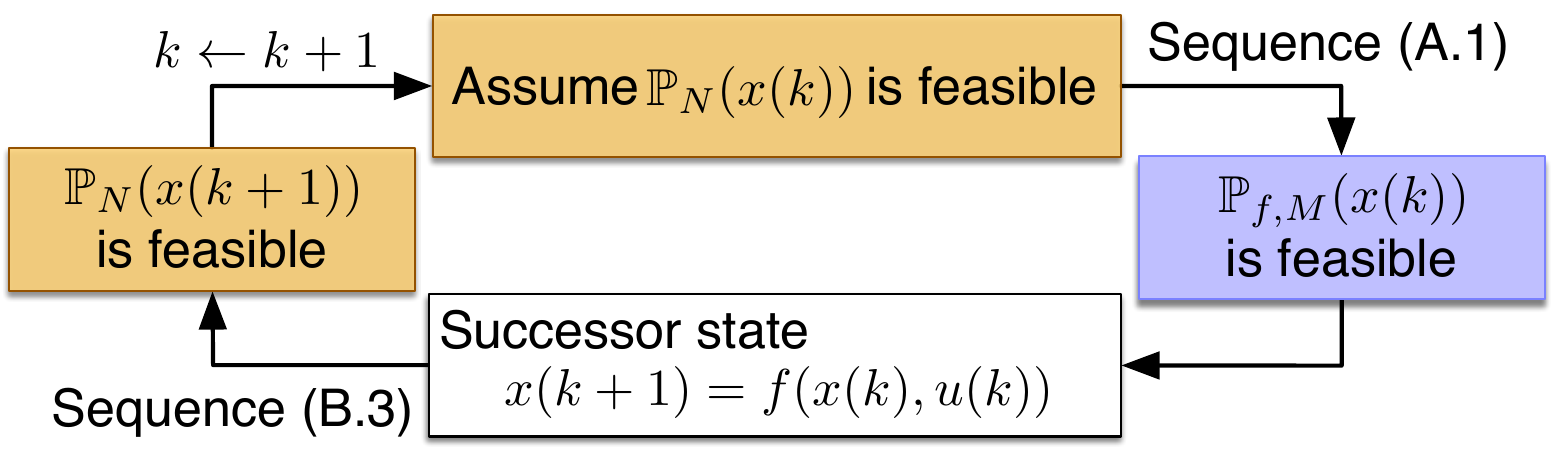}
    \caption{Schematic illustration of the recursive feasibility analysis.}
    \label{fig:recursive feasibility}
\end{figure}

To prove property~(b), we employ the optimal value function $V_N^*(x(k))$ of the nominal MPC problem as a Lyapunov function candidate.  At time step $k{+}1$, the value of the Lyapunov candidate must be no greater than the cost evaluated along the feasible sequences defined in~\eqref{feak1}. Hence,
\begin{equation}
\begin{aligned}
V^*_N(x(k+1)) &\le V_N(x(k+1),\tilde{\mathbf{v}}(x(k{+}1)) )\\
   &=\sum_{i=0}^{M-1} \ell(x^*(i;x(k))),u^*(i;x(k)))  -\ell(x^*(0;x(k)), u^*(0;x(k))) \\
   &~~~+ \sum_{i=M}^{N-1} \ell(z^*(i;x(k)),v^*(i;x(k)))  + \ell(z^*(N;x(k)),\tilde{v}(N; x(k)))  + V_f(\tilde{z}(N+1; x(k)))
\end{aligned}    
\end{equation}
Noting~\eqref{percons} and~\eqref{terminalcost}, it follows that
\begin{equation}\label{vdecrease}
 \begin{aligned}
 V^*_N(x(k+1))&\le L(\mathbf{z}^*_{0:M-1}(x(k)), \mathbf{v}^*_{0:M-1}(x(k))) -(1-a)\ell(x^*(0;x(k)), u^*(0;x(k)))\\
 &~~~+\sum_{i=M}^{N-1} \ell(z^*(i;x(k)),v^*(i;x(k))) +  V_f(z^*(N; x(k)))\\
 &=V^*_N(x(k)) -(1-a)\ell(x^*(0;x(k)), u^*(0;x(k)))\\
 &\le V^*_N(x(k)) -(1-a)\,\alpha_1(|x(k)|)
 \end{aligned}    
\end{equation}
where $\alpha_1(\cdot)$ is a \(\mathcal{K}_\infty\) function from Assumption~\ref{assum:cost_bounds}. By Assumption~\ref{assum:cost_bounds} and Proposition~\ref{lem:VN_upper}, the optimal value function satisfies
\begin{equation}\label{finalbounds}
    \alpha_1(|x(k)|) \le V_N^*(x(k)) \le \beta(|x(k)|).
\end{equation}
Together with the decrease condition in~\eqref{vdecrease},  the value function of the nominal MPC, $V_N^*(x(k))$, serves as a valid Lyapunov function for the closed-loop system, ensuring asymptotic stability of the origin. 
This completes the proof of property~(b). \QEDA

\textbf{Proof of Theorem \ref{thm:timevarying}.}
Property~(a) can be established by constructing feasible sequences in the same manner as in the proof of Theorem~\ref{thm:Recursive feasibility and stability}, with the fixed horizon $M$ replaced by the time-varying horizon $M(k)$. 

Since $\sup_{k \ge K_s} a(k) < 1$, there exists a constant $\bar{a} \in (0,1]$ such that 
$a(k) \le 1 - \bar{a}$ for all $k \in\mathbb{I}_{\ge K_s}$. 
It then follows that
\begin{equation}\label{vdecrease2}
\begin{aligned}
V_N^*(x(k{+}1)) 
&\le V_N^*(x(k)) - \big(1 - a(k)\big)\,\alpha_1\big(|x(k)|\big)  \le V_N^*(x(k)) - \bar{a}\,\alpha_1\big(|x(k)|\big), 
\quad \forall\, k \in\mathbb{I}_{\ge K_s},
\end{aligned}
\end{equation}
which ensures monotonic decrease of $V_N^*(x(k))$ and hence proves property~(b). 
\QEDA


\begin{thebibliography}{10}
\providecommand{\url}[1]{#1}
\csname url@samestyle\endcsname
\providecommand{\newblock}{\relax}
\providecommand{\bibinfo}[2]{#2}
\providecommand{\BIBentrySTDinterwordspacing}{\spaceskip=0pt\relax}
\providecommand{\BIBentryALTinterwordstretchfactor}{4}
\providecommand{\BIBentryALTinterwordspacing}{\spaceskip=\fontdimen2\font plus
\BIBentryALTinterwordstretchfactor\fontdimen3\font minus \fontdimen4\font\relax}
\providecommand{\BIBforeignlanguage}[2]{{%
\expandafter\ifx\csname l@#1\endcsname\relax
\typeout{** WARNING: IEEEtran.bst: No hyphenation pattern has been}%
\typeout{** loaded for the language `#1'. Using the pattern for}%
\typeout{** the default language instead.}%
\else
\language=\csname l@#1\endcsname
\fi
#2}}
\providecommand{\BIBdecl}{\relax}
\BIBdecl

\bibitem{hewing2020learning}
L.~Hewing, K.~P. Wabersich, M.~Menner, and M.~N. Zeilinger, ``Learning-based model predictive control: {Toward} safe learning in control,'' \emph{Annual Review of Control, Robotics, and Autonomous Systems}, vol.~3, no.~1, pp. 269--296, 2020.

\bibitem{kaufmann2023champion}
E.~Kaufmann, L.~Bauersfeld, A.~Loquercio, M.~M{\"u}ller, V.~Koltun, and D.~Scaramuzza, ``Champion-level drone racing using deep reinforcement learning,'' \emph{Nature}, vol. 620, no. 7976, pp. 982--987, 2023.

\bibitem{song2023reaching}
Y.~Song, A.~Romero, M.~M{\"u}ller, V.~Koltun, and D.~Scaramuzza, ``Reaching the limit in autonomous racing: Optimal control versus reinforcement learning,'' \emph{Science Robotics}, vol.~8, no.~82, p. eadg1462, 2023.

\bibitem{berberich2020data}
J.~Berberich, J.~K{\"o}hler, M.~A. M{\"u}ller, and F.~Allg{\"o}wer, ``Data-driven model predictive control with stability and robustness guarantees,'' \emph{IEEE Transactions on Automatic Control}, vol.~66, no.~4, pp. 1702--1717, 2020.

\bibitem{fazlyab2020safety}
M.~Fazlyab, M.~Morari, and G.~J. Pappas, ``Safety verification and robustness analysis of neural networks via quadratic constraints and semidefinite programming,'' \emph{IEEE Transactions on Automatic Control}, vol.~67, no.~1, pp. 1--15, 2020.

\bibitem{yan2023surviving}
Y.~Yan, X.-F. Wang, B.~J. Marshall, C.~Liu, J.~Yang, and W.-H. Chen, ``Surviving disturbances: {A} predictive control framework with guaranteed safety,'' \emph{Automatica}, vol. 158, p. 111238, 2023.

\bibitem{hsu2023safety}
K.-C. Hsu, H.~Hu, and J.~F. Fisac, ``The safety filter: A unified view of safety-critical control in autonomous systems,'' \emph{Annual Review of Control, Robotics, and Autonomous Systems}, vol.~7, 2023.

\bibitem{wabersich2023data}
K.~P. Wabersich, A.~J. Taylor, J.~J. Choi, K.~Sreenath, C.~J. Tomlin, A.~D. Ames, and M.~N. Zeilinger, ``Data-driven safety filters: Hamilton-jacobi reachability, control barrier functions, and predictive methods for uncertain systems,'' \emph{IEEE Control Systems Magazine}, vol.~43, no.~5, pp. 137--177, 2023.

\bibitem{ames2016control}
A.~D. Ames, X.~Xu, J.~W. Grizzle, and P.~Tabuada, ``Control barrier function based quadratic programs for safety critical systems,'' \emph{IEEE Transactions on Automatic Control}, vol.~62, no.~8, pp. 3861--3876, 2016.

\bibitem{wabersich2022predictive}
K.~P. Wabersich and M.~N. Zeilinger, ``Predictive control barrier functions: Enhanced safety mechanisms for learning-based control,'' \emph{IEEE Transactions on Automatic Control}, vol.~68, no.~5, pp. 2638--2651, 2022.

\bibitem{ha2025learning}
S.~Ha, J.~Lee, M.~van~de Panne, Z.~Xie, W.~Yu, and M.~Khadiv, ``Learning-based legged locomotion: State of the art and future perspectives,'' \emph{The International Journal of Robotics Research}, vol.~44, no.~8, pp. 1396--1427, 2025.

\bibitem{tearle2021predictive}
B.~Tearle, K.~P. Wabersich, A.~Carron, and M.~N. Zeilinger, ``A predictive safety filter for learning-based racing control,'' \emph{IEEE Robotics and Automation Letters}, vol.~6, no.~4, pp. 7635--7642, 2021.

\bibitem{jankovic2018robust}
M.~Jankovic, ``Robust control barrier functions for constrained stabilization of nonlinear systems,'' \emph{Automatica}, vol.~96, pp. 359--367, 2018.

\bibitem{cortez2022compatibility}
W.~S. Cortez and D.~V. Dimarogonas, ``On compatibility and region of attraction for safe, stabilizing control laws,'' \emph{IEEE Transactions on Automatic Control}, vol.~67, no.~9, pp. 4924--4931, 2022.

\bibitem{zinage2025universal}
V.~Zinage and E.~Bakolas, ``Universal barrier functions for safety and stability of constrained nonlinear systems,'' \emph{arXiv preprint arXiv:2511.01067}, 2025.

\bibitem{milios2024stability}
E.~Milios, K.~P. Wabersich, F.~Berkel, and L.~Schwenkel, ``Stability mechanisms for predictive safety filters,'' in \emph{IEEE Conference on Decision and Control}, 2024, pp. 2409--2416.

\bibitem{didier2024predictive}
A.~Didier, A.~Zanelli, K.~P. Wabersich, and M.~N. Zeilinger, ``Predictive stability filters for nonlinear dynamical systems affected by disturbances,'' \emph{IFAC-PapersOnLine}, vol.~58, no.~18, pp. 200--207, 2024.

\bibitem{rickert2014balancing}
M.~Rickert, A.~Sieverling, and O.~Brock, ``Balancing exploration and exploitation in sampling-based motion planning,'' \emph{IEEE Transactions on Robotics}, vol.~30, no.~6, pp. 1305--1317, 2014.

\bibitem{marcano2020review}
M.~Marcano, S.~D{\'\i}az, J.~P{\'e}rez, and E.~Irigoyen, ``A review of shared control for automated vehicles: Theory and applications,'' \emph{IEEE Transactions on Human-Machine Systems}, vol.~50, no.~6, pp. 475--491, 2020.

\bibitem{selvaggio2021autonomy}
M.~Selvaggio, M.~Cognetti, S.~Nikolaidis, S.~Ivaldi, and B.~Siciliano, ``Autonomy in physical human-robot interaction: A brief survey,'' \emph{IEEE Robotics and Automation Letters}, vol.~6, no.~4, pp. 7989--7996, 2021.

\bibitem{rawlings2020model}
J.~B. Rawlings, D.~Q. Mayne, and M.~Diehl, \emph{Model predictive control: {Theory}, computation, and design (2nd Edition)}.\hskip 1em plus 0.5em minus 0.4em\relax Nob Hill Publishing Madison, WI, 2017.

\bibitem{limon2005enlarging}
D.~Limon, T.~Alamo, and E.~F. Camacho, ``Enlarging the domain of attraction of {MPC} controllers,'' \emph{Automatica}, vol.~41, no.~4, pp. 629--635, 2005.

\bibitem{xie2021disturbance}
H.~Xie, L.~Dai, Y.~Lu, and Y.~Xia, ``Disturbance rejection {MPC} framework for input-affine nonlinear systems,'' \emph{IEEE Transactions on Automatic Control}, vol.~67, no.~12, pp. 6595--6610, 2021.

\bibitem{kim2025nonholonomic}
K.~H. Kim, V.~Todorovski, and M.~Krsti{\'c}, ``Nonholonomic robot parking by feedback--{Part II}: Nonmodular, inverse optimal, adaptive, prescribed/fixed-time and safe designs,'' \emph{arXiv preprint arXiv:2511.15219}, 2025.

\bibitem{nascimento2018nonholonomic}
T.~P. Nascimento, C.~E. D{\'o}rea, and L.~M.~G. Gon{\c{c}}alves, ``Nonholonomic mobile robots' trajectory tracking model predictive control: {A} survey,'' \emph{Robotica}, vol.~36, no.~5, pp. 676--696, 2018.

\end{thebibliography}


\end{document}